\def\BibTeX{{\rm B\kern-.05em{\sc i\kern-.025em b}\kern-.08em
    T\kern-.1667em\lower.7ex\hbox{E}\kern-.125emX}}
\newcommand{\elis}[1]{\normalsize{\color{red}(Elis:\ #1)}}
\newcommand{\RR}{\mathbb{R}}
\newtheorem{definition}{Definition}
\newtheorem{lemma}{Lemma}
\newtheorem{remark}{Remark}
\newtheorem{proposition}{Proposition}
\newtheorem{convention}{Convention}
\DeclareMathOperator*{\argmin}{arg\,min}
\def\longversion{1} % 1 for long version, 0 for short version.
\tikzstyle{process} = [rectangle, minimum width=3cm, minimum height=1cm, text centered, text width=3cm, draw=black]
\tikzstyle{decision} = [diamond, minimum width=3cm, minimum height=1cm, text badly centered, draw=black]
\tikzstyle{cloud} = [draw, ellipse, node distance=3cm, minimum height=2em]
\tikzstyle{startstop} = [rectangle, rounded corners, minimum width=1cm, minimum height=1cm,text centered, draw=black]
\tikzstyle{arrow} = [thick,->,>=stealth]
\tikzstyle{block} = [draw, rectangle, 
\tikzstyle{sum} = [draw, fill=blue!20, circle, node distance=1cm]
\tikzstyle{input} = [coordinate]
\tikzstyle{output} = [coordinate]
\tikzstyle{pinstyle} = [pin edge={to-,thin,black}]
\begin{document}
%Efficient Optimal Planning in \\ Hierarchical Finite State Machines
%Efficient Optimal Planning in Large-scale Systems Modelled as Hierarchical Finite State Machines 
%Efficient Optimal Planning in Hierarchical Finite State Machines Using Modular Reduction
%Optimal Planning in Hierarchical Finite State Machines using a Reduced-solve-expand Approach
%Efficient Optimal Planning in Large-scale Systems Modelled as Hierarchical Finite State Machines

\title{\LARGE \bf
Efficient and Reconfigurable Optimal Planning in Large-Scale Systems Using Hierarchical Finite State Machines
}

%\author{ \parbox{3 in}{\centering Huibert Kwakernaak*
%         \thanks{*Use the $\backslash$thanks command to put information here}\\
%         Faculty of Electrical Engineering, Mathematics and Computer Science\\
%         University of Twente\\
%         7500 AE Enschede, The Netherlands\\
%         {\tt\small h.kwakernaak@autsubmit.com}}
%         \hspace*{ 0.5 in}
%         \parbox{3 in}{ \centering Pradeep Misra**
%         \thanks{**The footnote marks may be inserted manually}\\
%        Department of Electrical Engineering \\
%         Wright State University\\
%         Dayton, OH 45435, USA\\
%         {\tt\small pmisra@cs.wright.edu}}
%}

\author{Elis Stefansson$^{1}$ and Karl H. Johansson$^{1}$% <-this % stops a space
\thanks{$^{1}$School of Electrical Engineering and Computer Science, KTH Royal Institute of Technology, Sweden. Email:
        {\tt\small \{elisst,kallej\}@kth.se}. The authors are also affiliated with Digital Futures.}%
\thanks{This work was partially funded by the Swedish Foundation for Strategic Research, the Swedish Research Council, and the Knut and Alice Wallenberg~foundation.}% <-this % stops a space
}

\maketitle

\iffalse
\author{Elis Stefansson$^{1}$ and Karl H. Johansson$^{1}$ % <-this % stops a space
%\thanks{*This work was not supported by any organization}% <-this % stops a space
\thanks{$^{1}$School of Electrical Engineering and Computer Science, KTH Royal Institute of Technology, Sweden. Email: \{elisst, kallej\}@kth.se}%
\thanks{This work was partially funded by the Swedish Foundation for Strategic Research, the Swedish Research Council, and the Knut and Alice Wallenberg Foundation.}
}
\fi

%\begin{document}

\maketitle
%\thispagestyle{empty}
%\pagestyle{empty}

%%%%%%%%%%%%%%%%%%%%%%%%%%%%%%%%%%%%%%%%%%%%%%%%%%%%%%%%%%%%%%%%%%%%%%%%%%%%%%%%
\begin{abstract}
In this paper, we consider a planning problem for a large-scale system modelled as a hierarchical finite state machine (HFSM) and develop a control algorithm for computing optimal plans between any two states. The control algorithm consists of two steps: a preprocessing step computing optimal exit costs for each machine in the HFSM, with time complexity scaling linearly with the number of machines, and a query step that rapidly computes optimal plans, truncating irrelevant parts of the HFSM using the optimal exit costs, with time complexity scaling near-linearly with the depth of the HFSM. The control algorithm is reconfigurable in the sense that a change in the HFSM is efficiently handled, updating only needed parts in the preprocessing step to account for the change, with time complexity linear in the depth of the HFSM. We validate our algorithm on a robotic application, comparing it with Dijkstra's algorithm and Contraction Hierarchies. Our algorithm outperforms both.
\end{abstract}

%\begin{IEEEkeywords}
%To be written. % Some keywords.
%\end{IEEEkeywords}
%%%%%%%%%%%%%%%%%%%%%%%%%%%%%%%%%%%%%%%%%%%%%%%%%%%%%%%%%%%%%%%%%%%%%%%%%%%%%%%%

\section{Introduction}

\subsection{Motivation}
In today's smart societies, with increased integrability and connectivity, there is an increasing need for efficient and optimal control of large-scale systems. Here, control algorithms should not only compute optimal plans for such systems efficiently but also be reconfigurable, i.e., be able to handle changes in parts of the system with ease. 

A common approach is to consider modular systems, i.e., systems that can be decomposed into independent entities (modules). For such systems, the idea is to construct control algorithms that can separate the computation into these modules and then combine the results from the modules to obtain an optimal plan for the whole system. Moreover, a change in a modular system typically affects only some of the modules, enabling control algorithms to be reconfigurable by only updating the corresponding affected parts in the~algorithm.

%Such modularity can also naturally account for changes in the system, updating only the modules affected by the change.

In this paper, we consider large-scale systems formalised by hierarchical finite state machines (HFSMs) \cite{harel1987statecharts}. An HFSM is a machine composed of several finite state machines (FSMs) nested into a hierarchical structure. This structure naturally makes an HFSM modular. The key research question is how to exploit the modularity in the HFSM to construct optimal control algorithms that are efficient and~reconfigurable.

\subsection{An Illustrative Example}\label{an_illustrative_example}
Consider an example where a robot is moving between lab houses, schematically depicted in Fig. \ref{fig:robot_example_detailed_overview}. In this environment, the robot is assigned to scan a tube, located in one of the houses, with minimal operational cost. Therefore, the robot must compute a plan that with minimal cost makes it reach the tube and scan it. 

Here, a key insight is that the environment can be naturally modelled as an HFSM with three layers corresponding to the layers in Fig. \ref{fig:robot_example_detailed_overview}, where Layer 1 prescribes how the robot can move between the houses, Layer 2 how the robot moves inside a house, and Layer 3 how the robot can operate the lab desk and scan tubes at a given location inside a house (all formally modelled by FSMs). The first key question is then \emph{how to construct a control algorithm that computes an optimal plan efficiently, by exploiting the hierarchical structure of the~HFSM.}

\begin{figure}
	\centering
  \includegraphics[width=0.4\textwidth]{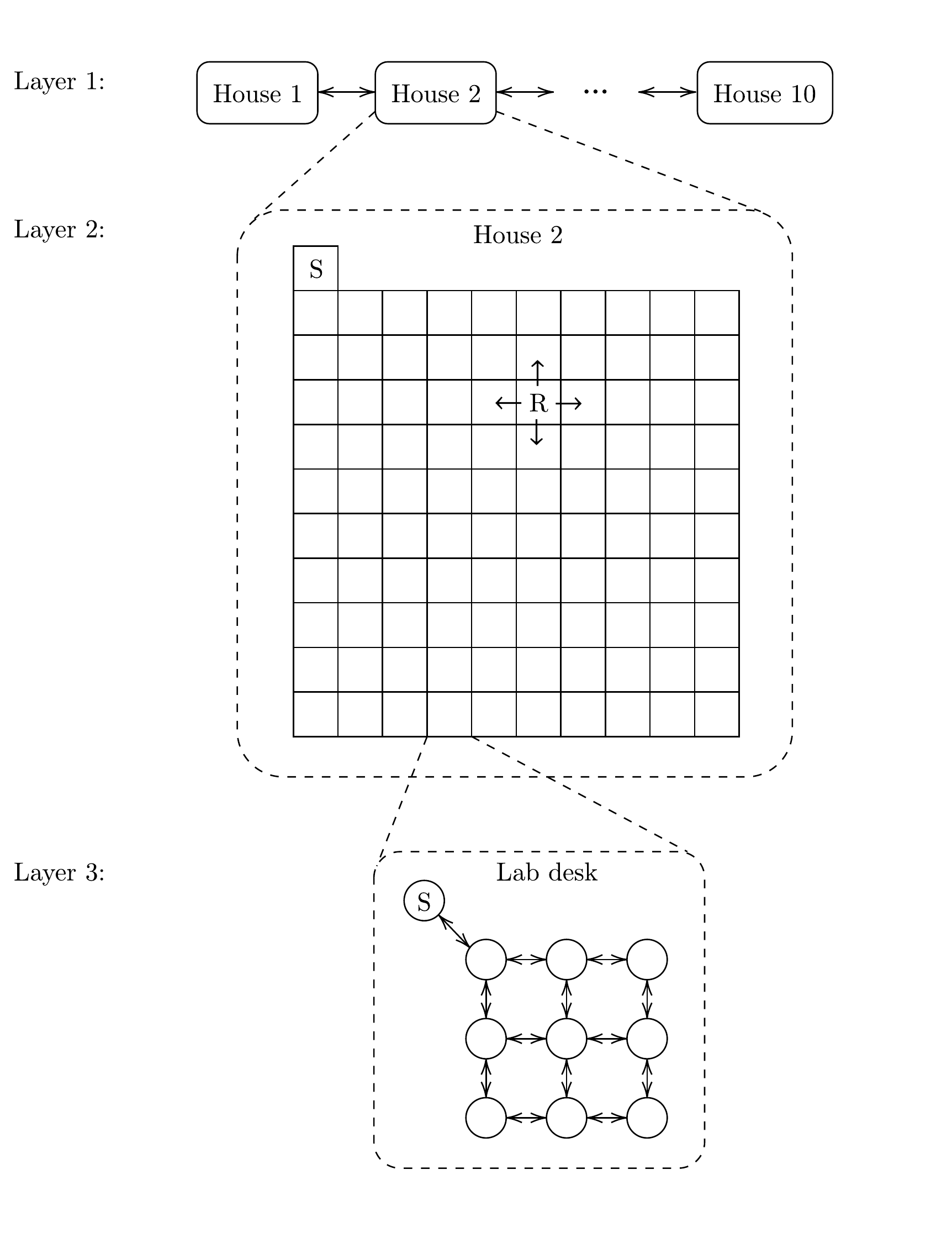}
  \caption{Robot application example modelled as an HFSM with three layers.} 
  %In each warehouse, the robot can move around to different locations, and at each location it can do different tasks, with decision costs given by some cost~functional.
  \label{fig:robot_example_detailed_overview}
\end{figure}

Suppose now that one of the houses changes, for example, some locations in House 2 are blocked due to maintenance work, as depicted by grey areas in Fig. \ref{fig:robot_example_change_overview}. This changes the HFSM and may thus affect the optimal plan computed by the robot (e.g., if the computed optimal plan was to go to the bottom-right location in House 2, then the robot must take a detour now to avoid the blocked locations). However, only a part of the HFSM has been changed and hence, the control algorithm of the robot should not need to reset all its computations, rather, it should only need to update the relevant parts. That is, the control algorithm should be reconfigurable. The second key question is then \emph{how to construct a control algorithm that is reconfigurable given changes in the HFSM.}

\begin{figure}
	\centering
  \includegraphics[width=0.49\textwidth]{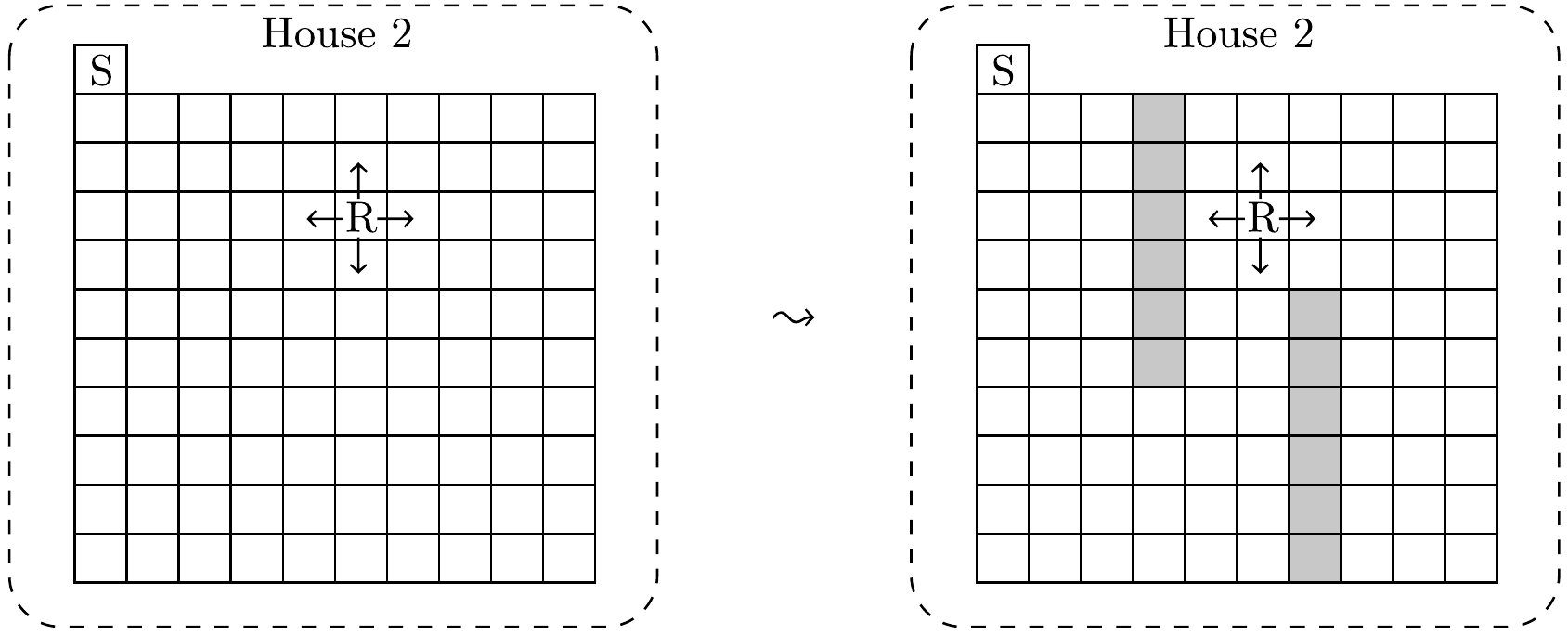}
  \caption{Change in House 2 in the HiMM given by Fig. \ref{fig:robot_example_detailed_overview}. Grey areas depict blocked locations.} 
  %In each warehouse, the robot can move around to different locations, and at each location it can do different tasks, with decision costs given by some cost~functional.
  \label{fig:robot_example_change_overview}
\end{figure}

\subsection{Contribution}
The main contribution of this paper is to formalise optimal planning in systems modelled as HFSMs, introduce changes in such systems, and provide an efficient and reconfigurable control algorithm for computing optimal plans. More precisely, our contribution is~three-fold:

%First, we do awesome modelling
First, we formalise the HFSM systems extending the setup of \cite{biggar2021modular} to machines with outputs, i.e., Mealy machines (MMs) \cite{Mealy1955}, calling the resulting HFSM for a Hierarchical Mealy Machine (HiMM), following the formulation in \cite{stefansson2023ecc}. We then formalise changes in HiMMs, called modifications, which can change any HiMM into any other HiMM (by a series of modifications), and specify what we mean for an algorithm to be reconfigurable with respect to modifications. % was define before

%Second, we propose a cool algorithm
Second, we propose a control algorithm that is both efficient and reconfigurable, extending the  algorithm in \cite{stefansson2023ecc} to be reconfigurable. The algorithm is divided into two steps: A preprocessing step where optimal exit costs are computed for all MMs in the given HiMM $Z$ (done only once for a fixed HiMM), with time complexity $O(N)$, where $N$ is the number of MMs in $Z$; and a query step, rapidly computing an optimal plan between any two states in $Z$ using the optimal exit costs, with time complexity $O(\mathrm{depth}(Z) \log ( \mathrm{depth}(Z)))$ for computing the next optimal input (compared to Dijkstra's algorithm \cite{Dijkstra1959, DijkstraFibonacci} with time complexity more than exponential in $\mathrm{depth}(Z)$ in general) and is therefore regarded as efficient. Here, $\mathrm{depth}(Z)$ is the number of layers in the HiMM $Z$. The algorithm is reconfigurable in the sense that each modification of $Z$ only needs an update of the preprocessing step in time $O(\mathrm{depth}(Z))$, instead of recomputing the whole preprocessing step. In all these time complexities, we have, for brevity, assumed bounds on the number of states and inputs in an MM of $Z$, see Sections \ref{control_algorithm_theory} and \ref{more_control_algorithm_theory} for the general expressions without this~assumption.

%For brevity, all these time complexities assumes bounds on the number of states and inputs in an MM in $Z$, see Sections \ref{control_algorithm_theory} and \ref{more_control_algorithm_theory} for detailed expressions without this assumption.

%Third, we valide our algorithm comparing it with other less good stuff.
Third, we validate our control algorithm on the robot application given in the motivation, comparing it with Dijkstra's algorithm \cite{Dijkstra1959} and Contraction Hierarchies \cite{geisberger2012exact}. We show that our control algorithm outperforms both.

\subsection{Related Work}
HFSMs are typically used to model the control law of an agent \cite{harel1987statecharts,schillinger2016human,millington2018artificial}, prescribing how the agent reacts to external inputs (e.g., ``low battery''), where FSMs in the HFSMs commonly corresponds to subtasks (e.g., ``charge battery''). Here, we instead use an HFSM formalism to model a control system where the agent can choose the inputs, for example, the robot can choose to go left in House 2 in the robot application in the motivation. The aim is to steer the system (e.g., the robot) to a desirable state with minimal cost. In discrete-event systems \cite{cassandras2008introduction}, another HFSM formalism, called state tree structures, has been developed to compute safe executions \cite{ma2006nonblocking,wang2020real}. We use a different formalism and focus instead on optimal planning minimising cost. % (i.e., inputs are decision variables)

There is a wide range of algorithms for optimal path planning on graphs such as Dijkstra's algorithm \cite{Dijkstra1959}, see \cite{bast2016route} for a survey. Here, a common approach is to use a preprocessing step to achieve faster planning at run-time \cite{bast2016route}, similar to our preprocessing step. In particular, hierarchical algorithms \cite{geisberger2012exact,dibbelt2016customizable,mohring2007partitioning,10.1145/1498698.1564502} using a preprocessing step are the ones most reminiscent of our approach. However, these approaches are typically based on heuristics which in the worst case perform no better than Dijkstra's algorithm. Furthermore, to use those methods, one also needs to flatten the HFSM to an equivalent FSM. This hides the hierarchical structure making these algorithms less suitable for reconfiguration. In this paper, we instead exploit the hierarchical structure to give formal performance guarantees and make the algorithm reconfigurable to changes. The paper \cite{timo2014reachability} considers a variant of an HFSM formalism and seeks an execution with the minimal length between two configurations. In this paper, we have non-negative costs, where minimal length is a special case by setting all costs to unit costs. 

In \cite{biggar2021modular}, HFSMs without outputs are formalised, and used to decompose an FSM into an HFSM. However, planning was not considered. In this paper, we extend their formalism to HFSMs with outputs and consider optimal planning.

Finally, this paper builds on the work \cite{stefansson2023ecc}. We use the same HFSM formalism as in \cite{stefansson2023ecc} and extend it to formally define changes in such a system, called modifications. We then extend the control algorithm in \cite{stefansson2023ecc} to be reconfigurable, able to handle modifications efficiently. %able to handle modifications efficiently.
%called modifications

% Contraction hierarchies are fast but if we get the same amount of speed then we are comparable and our framework should be more flexible and have less preprocessing time in when it comes to modifications. We also have formal guarantees of the time complexity, which is not the case for contraction hierarchies. I should write something like this for the related work concerning contraction hierarchies.
% About modification: You can base your work on https://dl.acm.org/doi/pdf/10.1145/3571282 to a larger extent. In that article, and what I have found in general, the modifiable work seems to deal only with change of weights when it comes to these planning setups. Here instead, we consider reconfigurations where basically a large part of the network can be changed. This seems to not be considered in the route planning literature. Something I should say and then take some references from the article linked above.

\subsection{Outline}
The outline of the remaining paper is as follows. Section~\ref{problem_formulation} formalises HiMMs, including modifications of HiMMs, and formulates the problem statement. Section \ref{control_algorithm_theory} presents the efficient control algorithm, and Section \ref{more_control_algorithm_theory} extends it to be reconfigurable. Section \ref{numerical_evaluations} then conducts numerical evaluations. Finally, Section \ref{conclusion} concludes the paper with a discussion and future directions. \if\longversion0 An extended version of this paper can be found at \cite{stefansson2023cdc} that contains all the proofs in the Appendix.\else
All proofs are in the Appendix.
\fi

\subsection{Notation}
Let $f: A \rightharpoonup B$ denote a partial function from set $A$ to set $B$, and let $f(a) = \emptyset$ mean that $f$ is not defined for $a \in A$. For a tree $T$, we use the notation $(X \xrightarrow{v} Y) \in T$ meaning that there is an arc from $X$ to $Y$ labelled $v$ in $T$, and denote the depth of $T$ by $\mathrm{depth}(T)$, that is, the maximum number of arcs one can traverse in $T$ starting from the root. Finally, $|A|$ is the cardinality of the set $A$, and $\RR^+ = \{ x \geq 0: x \in \RR\}$.
% Perhaps clarify the tree notation already here. 

\section{Problem Formulation}\label{problem_formulation}

\subsection{Hierarchical Mealy Machines}

In this section, we formalise the notion of an HiMM, also found in \cite{stefansson2023ecc}, extending  the formalism in \cite{biggar2021modular} to machines with outputs. We start with the notion of an MM.

\begin{definition}[Mealy Machine]
An MM is a tuple $M = (Q,\Sigma,\Lambda,\delta,\gamma,s)$, where $Q$ is the finite set of states, $\Sigma$ the finite set of inputs, $\Lambda$ the finite set of outputs, $\delta: Q \times \Sigma \rightharpoonup Q$ the transition function, $\gamma: Q \times \Sigma \rightarrow \Lambda$ the output function, and $s \in Q$ the start~state. We sometimes use the notation $Q(M)$, $\Sigma(M)$, $\Lambda(M)$, $\delta_M$, $\gamma_M$ and $s(M)$ to stress that e.g., $Q(M)$ is the set of states of $M$.
\end{definition}
The intuition is as follows. The MM $M = (Q,\Sigma,\Lambda,\delta,\gamma,s)$ starts in $s \in Q$, and transits to the next state $q' = \delta(q,x)$ and outputs $\gamma(q,x)$ given current state $q \in Q$ and input $x \in \Sigma$, or stops if $\delta(q,x) = \emptyset$. Repeating this process, we obtain a trajectory of $M$:%a trajectory $(q_i,x_i)_{i=1}^N$ of $M$, where $q_1 \in Q$, $q_{i+1} = \delta(q_i,x_i) \in Q$ and $x_i \in \Sigma$ for $i \in \{1,\dots,N-1\}$.
\begin{definition}
A sequence $(q_i,x_i)_{i=1}^N \in (Q \times \Sigma)^{N}$ is a trajectory of an MM $M = (Q,\Sigma,\Lambda,\delta,\gamma,s)$ if $q_{i+1} = \delta(q_i,x_i) \in Q$ for $i \in \{1,\dots,N-1\}$.
\end{definition}
In our setting, we assume that an agent (e.g., a robot) can choose the inputs. Hence, we also define a plan to be a sequence $u = (x_i)_{i=1}^N \in \Sigma^N$ of inputs, and $z = (q_i,x_i)_{i=1}^N$ to be the induced trajectory of $u$ starting from $q_1$ if $z$ is a~trajectory. We next provide the definition of an HiMM.

%Here, the transition function and the output function are partial functions since the transitions might not be supported. 

\begin{definition}[Hierarchical Mealy Machine]\label{HiMM_def}
An HiMM is a tuple $Z = (X,T)$ consisting of a set $X$ of MMs (the machines in $Z$) with a common input set $\Sigma$ and output set $\Lambda$, and a tree $T$ with nodes given by $X$ (how the MMs are composed in $Z$). More precisely, each node $M \in X$ in $T$ has $|Q(M)|$ outgoing arcs $\{M \xrightarrow{q} N_q \}_{q \in Q(M)}$ where either $N_q \in X$ (meaning that state $q$ in $M$ corresponds to the MM $N_q$ one layer below in the hierarchy of $Z$) or $N_q = \emptyset$ (meaning that state $q$ is simply just a state having no further refinement). % HERE I AM. I NEED TO COME UP WITH GOOD DEFINITIONS FOR STATE AND NODE OF Z, RIGHT NOW I THINK IT IS CONFUSING TO CALL IT NODE SINCE THERE IS ALSO NODES IN T. ARC-LABEL? Extended state? 
%In the former case, we say that $q$ and $N_q$ corresponds to each other.
For brevity, we call $Q_Z := \cup_{X_i \in X} Q(X_i)$ the nodes of $Z$, and $S_Z := Q_Z \cap \{q: N_q = \emptyset \}$ the states of $Z$. We assume that all state sets are disjoint, that is, $Q(X_i) \cap Q(X_j) = \emptyset$ for all $X_i,X_j \in X$ such that $X_i \neq X_j$. Furthermore, we have corresponding notions for the start state, transition function, and output function:
\begin{enumerate}[(i)]
\item The start function: The start state is a function ${\mathrm{start}: X \rightarrow S_Z}$ given by 
\begin{equation*}
\mathrm{start}(X_i) =
\begin{cases}
\mathrm{start}(X_j), &  \textrm{$(X_i \xrightarrow{s(X_i)} X_j) \in T$} \\
s(X_i), & \textrm{otherwise.}
\end{cases}
\end{equation*}
\item Hierarchical transition function: Let $q \in Q(X_j)$ with $X_j \in X$ and $v=\delta_{X_j}(q,x)$. The hierarchical transition function $\psi: Q_Z \times \Sigma \rightharpoonup S_Z$ is given by%\footnote{Concerning the third condition, note that unless $X_j$ is the root node of $T$, then there always exists a unique $W$ and $w$ such that $(W \xrightarrow{w} X_j) \in T$.}
\begin{equation*}
\psi(q,x) =
\begin{cases} 
\mathrm{start}(Y), \hspace{-2pt} & v \neq \emptyset, (X_j \xrightarrow{v} Y) \in T, Y \in X \\ % Had Y \in X Here too
v, & v \neq \emptyset, \mathrm{otherwise} \\
\psi(w,x), & v = \emptyset, (W \xrightarrow{w} X_j) \in T, W \in X \\ % Had W \in X Here too
\emptyset, & v = \emptyset, \mathrm{otherwise.}
\end{cases}
\end{equation*}
\item Hierarchical output function: Let $q \in Q(X_j)$ with $X_j \in X$ and $v=\delta_{X_j}(q,x)$. The hierarchical output function $\chi: Q_Z \times \Sigma \rightharpoonup \Lambda$ is given by
\begin{equation*}
\chi(q,x) =
\begin{cases} 
\gamma_{X_j}(q,x), \hspace{-2pt} & v \neq \emptyset \\
\chi(w,x), & v = \emptyset, (W \xrightarrow{w} X_j) \in T, W \in X \\ % Had W \in X Here too
\emptyset, & v = \emptyset, \mathrm{otherwise.}
\end{cases}
\end{equation*} 
\end{enumerate}
\end{definition}

An HiMM $Z=(X,T)$ works analogously as an MM: The HiMM $Z=(X, T)$ starts in state $\mathrm{start}(M_0)$ with $M_0$ being the root MM of $T$, and for each state $q \in S_Z$ and input $x \in \Sigma$ the machine transits to $q' = \psi(q,x) \in S_Z$ with output $\chi(q,x) \in \Lambda$, or stops if $\psi(q,x) = \emptyset$. A trajectory of an HiMM is defined analogously to a trajectory of an MM.

\begin{figure}
	\centering
  \includegraphics[width=0.20\textwidth]{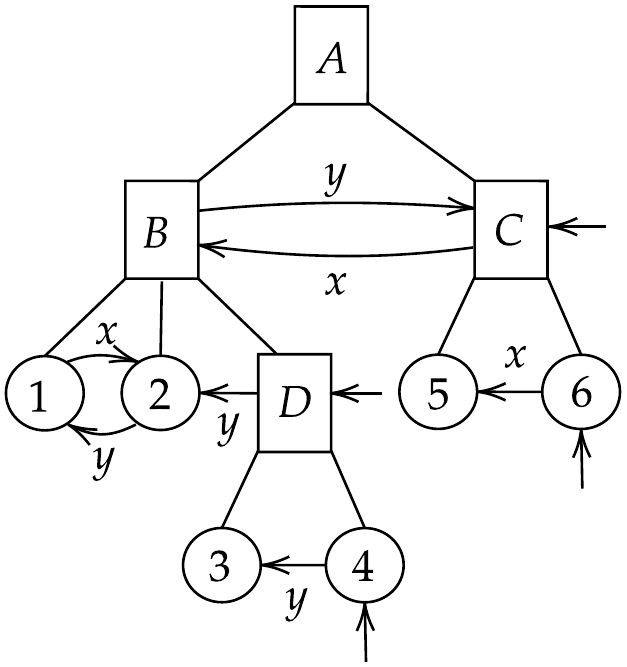} % Was width=0.20\textwidth
  \caption{Example of an HiMM depicted in a tree-form.} 
  %In each warehouse, the robot can move around to different locations, and at each location it can do different tasks, with decision costs given by some cost~functional.
  \label{fig:Himm_transition_example_v2}
\end{figure}

\begin{remark}[Intuition] Definition \ref{HiMM_def} is technical, but the intuition is simple. To see it, depict an HiMM $Z = (X,T)$ as a tree similar to $T$ with some extra details, as in Fig. \ref{fig:Himm_transition_example_v2}. Here, $X = \{A,B,C,D\}$ are the MMs of $Z$ with dependency given by lines, e.g., $B$ is a child of $A$ in $T$. States of $Z$ are depicted as circles. We also depict the transitions of each MM given by arrows, where the label of the arrow specifies the input needed to make the transition. We omit writing the outputs explicitly. For example, in MM $A$, one goes from (the state corresponding to) $B$ to (the state corresponding to) $C$ with input $y$. Also, small arrows indicate start states (e.g., $s(D)=4$).
We now provide the intuition of Definition \ref{HiMM_def} with an example. Assume we are at state 5 and apply input $x$. If there were a transition from 5 with $x$ (as from state 6), $Z$ would just follow that transition. However, since this is not the case, $Z$ instead goes upwards iteratively in the tree $T$ until it finds a node that supports this input (or stops if it does not find one), in this case, $C$ in the MM $A$ has a transition with $x$ (to $B$). Thus, $Z$ moves to $B$. After this, $Z$ follows the start states downwards until it reaches a state of $Z$, in this case, it goes to $D$ then 4. Here, the transition is complete and thus $\psi(5,x) = 4$. Furthermore, the output $\chi(5,x)$ corresponds to the output where the transition step occurred, in our case going from $C$ with input $x$. This is to support modularity neatly where we care about that we got from $C$ with input $x$ (e.g., $x$ can encode that we did a task in $C$), and not how we managed to exit $C$ (e.g., how we did the task in $C$). Other setups are considered as future~work.

%If $Z$ is in a state $q$ of an MM $M \in X$ with input $a \in \Sigma$ such that there is no transition in $M$ from $q$ with $a$, then $Z$ will first go iteratively up in the hierarchy until it finds a state that supports this action. For example, if $q = 5$ and $a =x$, no transition (in $C$) exits, hence $Z$ looks if there is a transition in $A$ from (the state corresponding to) $C$. In this case, there is one to $B$, hence $Z$ moves to $B$. After $Z$ has found a transition (or stop if $Z$ did not find one), it will proceed downwards instead in the hierarchy, iteratively following the start states of the MMs until it reaches a start state which is a state of $Z$ (hence, cannot go further down). In the example, $Z$ will go from $B$ to $D$ and from $D$ to 4 (which is a state of $Z$). Here, the transition is complete, e.g., $\psi(5,x) = 4$ in the example. Furthermore, the output $\chi(q,a)$, e.g, $\chi(5,x)$, corresponds to the output where the transition step occurred, in our case the output going from (the state corresponding to) $C$ with input $x$. \elis{motivate this more perhaps}
\end{remark}

\subsubsection{Optimal Planning}

In this paper, the output $\chi(q,x)$ for an HiMM $Z$ will exclusively model the cost for applying input $x$ from state $q$, therefore, we henceforth assume $\Lambda \subset \RR^+$. We also consider the cost of a trajectory $z= (q_i,x_i)_{i=1}^N$, called the cumulative cost, defined by $C(z) := \sum_{i=1}^N \chi(q_i,x_i)$ if all $\chi(q_i,x_i) \neq \emptyset$, and $C(z) = + \infty$ otherwise.\footnote{In fact, only $\chi(q_i,x_i)$ may be empty, since $z$ is a trajectory. Also, we put $C(z) = + \infty$ in the latter case since we do not want $Z$ to stop.} In this setting, we want to find a plan taking $Z$ from a state $s_{\mathrm{init}}$ to a state $s_{\mathrm{goal}}$ with minimal cumulative cost. In detail, let $U(s_{\mathrm{init}}, s_{\mathrm{goal}})$ be the set of all plans taking $Z$ from $s_{\mathrm{init}}$ to $s_{\mathrm{goal}}$, i.e., $u \in U(s_{\mathrm{init}}, s_{\mathrm{goal}})$ has an induced trajectory $z =(q_i,x_i)_{i=1}^N$ such that $q_1 =  s_{\mathrm{init}}$ and $q_{N+1} := \psi(q_N,x_N) = s_{\mathrm{goal}}$. Then we want to find an optimal plan $\hat{u} \in \argmin_{u \in U(s_{\mathrm{init}}, s_{\mathrm{goal}})} C(z)$ (where $z$ is the induced trajectory of $u$). We call $\hat{u}$ an optimal plan to the planning objective $(Z,s_{\mathrm{init}},s_{\mathrm{goal}})$ and the corresponding trajectory for an optimal trajectory. In Section \ref{control_algorithm_theory}, we present a control algorithm that efficiently computes optimal plans.

\subsection{Modifications}
In this section, we formalise changes in an HiMM, called modifications. More precisely, we introduce four kinds of modifications: state addition, state subtraction, arc modification and composition, specified by Definitions \ref{def:state_addition} to \ref{def:composition}. These modifications are schematically depicted in Fig.~\ref{fig:all_operations}.

%\elis{Should I have it here? I guess so, then one can really prescribe the problem formulation in a better way. Also use, $\leadsto$ to denote a change, e.g, $Z_1 \leadsto Z_2$.}

\begin{definition}[State Addition]\label{def:state_addition}
Let $Z = (X,T)$ be an HiMM and $M \in X$ be an MM of $Z$. We say that an HiMM $Z_{\mathrm{add}} = (X_{\mathrm{add}},T_{\mathrm{add}})$ is added to $M$ in $Z$ forming a new HiMM $Z'$ with tree $T'$ equal to $T$ except that a new arc $M \xrightarrow{q} N$ has been added from $M$ to the root MM $N$ of $T_{\mathrm{add}}$ (connecting $Z_{\mathrm{add}}$ to $M$) labelled with some (arbitrary but distinct) state $q \notin Q(M)$, and $M$ has updated its state set to $Q \cup \{q\}$. Adding a state to $M$ in $Z$ is identical except that the arc $M \xrightarrow{q} N$ is instead a terminal arc $M \xrightarrow{q} \emptyset$.
\end{definition}

\begin{figure}
	\centering
  \includegraphics[width=0.45\textwidth]{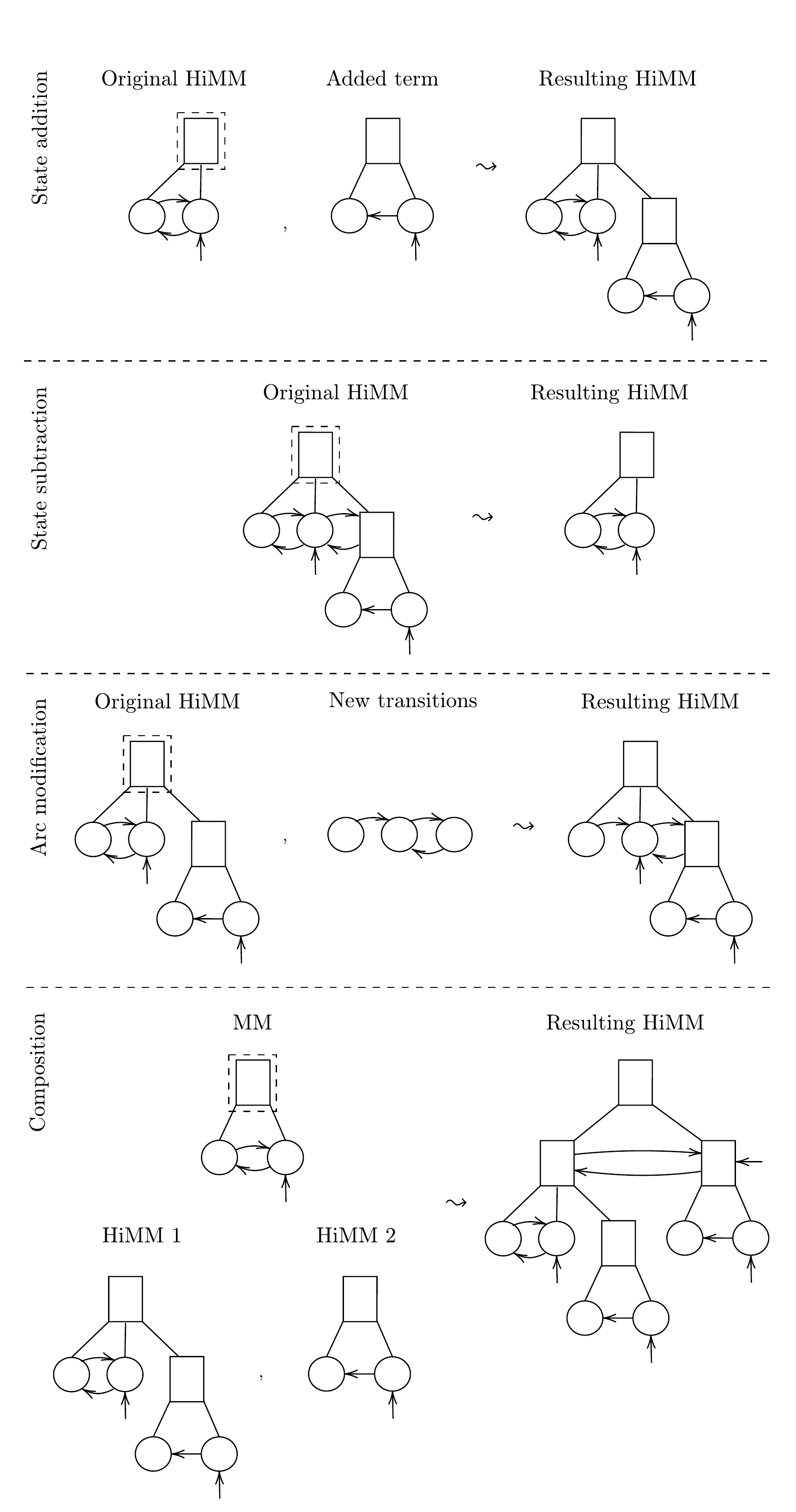}
  \caption{The four modifications of an HiMM depicted. The dashed rectangle specifies the MM $M$ subject to change.} 
  %In each warehouse, the robot can move around to different locations, and at each location it can do different tasks, with decision costs given by some cost~functional.
  \label{fig:all_operations}
\end{figure}

\begin{definition}[State Subtraction]\label{def:state_subtraction}
Let $Z = (X,T)$ be an HiMM and $M \in X$ be an MM of $Z$. We say that we subtract a state $q \in Q(M)$ and form a new HiMM $Z'$ with tree $T'$ equal to $T$ except that the $q$-arc from $M$ is removed, and $M$ is the modified MM with state $q$ removed.\footnote{For simplicity, we forbid the start state to be removed. However, using an arc modification, one can change the start state and then remove the previous start state.}
\end{definition}
%removed as well as transitions and outputs corresponding to $q$ are removed.

\begin{definition}[Arc Modification]\label{def:arc_modification}
Let $Z = (X,T)$ be an HiMM and $M = (Q,\Sigma,\Lambda,\delta,\gamma,s) \in X$ be an MM of $Z$. Let $\delta': Q \times \Sigma \rightharpoonup Q$, $\gamma': Q \times \Sigma \rightarrow \Lambda$ and $s' \in Q$ be the new transition function, output function and start state, and form $M' = (Q,\Sigma,\Lambda,\delta',\gamma',s')$. Replace $M$ with $M'$ in $Z$ with new HiMM $Z'$ as a result, called an arc~modification.
\end{definition}

\begin{definition}[Composition]\label{def:composition}
Let $M$ be an MM with states $Q = \{q_1,\dots,q_{|Q|}\}$ and $Z_{\mathrm{seq}} = \{ Z_1,\dots,Z_{n} \} $ be a set of HiMMs such that $n \leq |Q|$. The composition of $Z_{\mathrm{seq}}$ with respect $M$ is the HiMM $Z'$ with tree $T'$ having root $M$ and arcs $M \xrightarrow{q_i} R_i$ for $1 \leq i \leq n$, where $R_i$ is the root MM of $T_i$ (connecting $Z_i =(X_i,T_i)$ to $M$), and $M \xrightarrow{q_i} \emptyset$ for $i>n$.
\end{definition}

\iffalse
\begin{definition}[Composition]\label{def:composition}
Let $M$ be a Mealy machine with states $Q = \{q_1,\dots,q_{|Q|}\}$ and $Z_{seq} = \{ Z_1,\dots,Z_{n} \} $ be a set of HiMMs such that $n \leq |Q|$. The composition of $Z_{seq}$ with respect $M$ is the HiMM $Z' = (X',T')$ given by:
\begin{enumerate}
\item $X' = \{M\} \cup ( \bigcup_{i=1}^{|Q|} X_i)$, where $X_i$ is the set of MMs of~$Z_i$.
\item $T'$ is the tree with root node $M$ and arcs $M \xrightarrow{q_i} R_i$ for $1 \leq i \leq n$, where $R_i$ is the root node of the tree $T_i$ of $Z_i$, and terminal arcs $M \xrightarrow{q_i} \emptyset$ for $i>n$.
\end{enumerate}
\end{definition}
\fi

\begin{remark}[Intuition]
The composition replaces the first $n$ states in the MM $M$ with corresponding HiMMs from $Z_{\mathrm{seq}}$, while the remaining $|Q|-n$ states in $M$ are left unchanged.
\end{remark}

These four modifications are general enough to account for any change in the sense that any HiMM $Z=(X,T)$ can be changed to any other HiMM $Z'=(X',T')$ using these modifications. To see it, first add all the children of the root of $Z'$ (i.e., the root MM of $T'$) to the root of $Z$ (via repeated state addition), then do an arc modification so that the added part to the root of $Z$ is identical to the root of $Z'$ (including start state), and finally remove the original children of the root of $Z$ (using state subtraction). Then $Z$ is identical to $Z'$. Of course, this is a brute-force change, replacing the whole tree of $Z$ with the whole tree of $Z'$, and there may be other ways that use much fever modifications to change $Z$ into $Z'$. In particular, the true benefit of using modifications is instead when we have small changes in the system, such as removing just a subtree of $Z$ or changing the transitions in one of the MMs, captured by using just a few modifications.

\subsubsection{Reconfigurability}\label{reconfigurability_definition}
We say that a control algorithm is reconfigurable if a change in the HiMM given by any of the four modifications can be handled with low time complexity (where the change is assumed to be known to the algorithm). We propose such a control algorithm in Section~\ref{more_control_algorithm_theory}.

%We also use the modifications to define reconfigurability of a control algorithm. More precisely, assume that a control algorithm has access to a current HiMM $Z$ and can do a computation

%More precisely, a control algorithm is said to be reconfigurable if the time complexity is low for computing an optimal plan for any HiMM $Z$ has been changed by one of the four modifications.

%\begin{definition}[Reconfigurability]
%Consider a control algorithm that computes optimal plans for HiMMs. Then the control algorithm is said to be reconfigurable, if the time complexity is low for computing an optimal plan for any HiMM an $Z$ has been changed by one of the four modifications.
%\end{definition}

%More precisely, a control algorithm that computes optimal plans for a HiMM $Z$ is said to be reconfigurable, if 

%Such small changes are typically efficiently captured by only one or very few modifications, making it ideal for reconfigurable planning, as we will see in Section (?).

%We use the output as the cost and hence $\Lambda \subset \RR$.

%If $Z$ is in a state $q$ of an MM $M \in X$ (e.g., 6 in $C$) with input $a \in \Sigma$ such that there is an transition in $M$ (e.g., $a = x$), then $Z$ transits 

\subsection{Problem Statement}
We now formalise the problem statement. In this paper, we want to find a control algorithm that, for a given HiMM~$Z$:
\begin{enumerate}
\item Efficiently computes an optimal plan ${u}$ to any planning objective $(Z,s_{\mathrm{init}},s_{\mathrm{goal}})$, where efficiency means that the computation has low time complexity.
\item Is reconfigurable in the sense of Section \ref{reconfigurability_definition}.
%\item Efficiently correct itself whenever the HiMM $Z$ is changed by a sequence of modifications, where efficiency 
\end{enumerate}
We present in Section \ref{control_algorithm_theory} a control algorithm from \cite{stefansson2023ecc} that fulfils the first criteria and extend this control algorithm in Section \ref{more_control_algorithm_theory} to fulfil the second criteria.

\section{Efficient Hierarchical Planning}\label{control_algorithm_theory}
In this section, we present the control algorithm from \cite{stefansson2023ecc} for efficiently computing an optimal plan for a given and static (i.e., none-changing) HiMM $Z$. Then, in Section \ref{more_control_algorithm_theory}, we extend this algorithm to the reconfigurable case.

%More precisely, we consider the hierarchical planning algorithm from \cite{}, with details in Section (?), and extend this algorithm in Section (??) to the reconfigurable case when the given HiMM can be subject to change given by modifications. 

%\elis{Or should I just present the extended algorithm remarking what has been added?}

%The outline is as follows. In Section \ref{}, we provide the details of the algorithm from \cite{}, which asssumes the HiMM $Z$ is static. In Section, 
%The algorithm from \cite{} is detailed in Section \ref{}, 

\begin{algorithm}[t]
\caption{Static Hierarchical Planning}\label{alg:hierarchical_planning}
\begin{algorithmic}[1]
\Require HiMM$ \; Z =(X,T)$ and states $s_{\mathrm{init}}, s_{\mathrm{goal}}$.
\Ensure Optimal plan $u$ to $(Z,s_{\mathrm{init}}, s_{\mathrm{goal}})$.
\State \textbf{Optimal Exit Computer:}
\State $(c_x^M,z_x^M)_{x \in \Sigma}^{M \in X} \gets \mathrm{Compute\_optimal\_exits}(Z)$
\State \textbf{Optimal Planner:}
\State $z \gets \mathrm{Reduce\_and\_solve}(Z,s_{\mathrm{init}},s_{\mathrm{goal}},(c_x^M,z_x^M)_{x \in \Sigma}^{M \in X})$
%\State Construct graph $G$
%\State $\bar{z}_1 \gets \mathrm{Dijkstra}(s_{\mathrm{init}}, \bar{B}, G)$
%\State Compute optimal trajectory $\bar{z}_1$ from $s_{init}$ to $\bar{B}$ in $\bar{Z}$.
%\State Compute optimal trajectory $\bar{z}_2$ from $\bar{B}$ to $s_{\mathrm{goal}}$ in $\bar{Z}$.
%\State $\bar{z} \gets \bar{z}_1 \bar{z}_2$.
\State $u \gets \mathrm{Expand}(z,(z_x^M)_{x \in \Sigma}^{M \in X}, Z)$
\end{algorithmic}
\end{algorithm}

%We start with an overview of the algorithm, schematically depicted by Figure \ref{fig:algorithm_overview} and summarised by Algorithm \ref{alg:hierarchical_planning_new}. Consider first the case when the given HiMM $Z$ is static, that is, does not change by modifications. In this case, the parts coloured red in Figure \ref{fig:algorithm_overview} and Algorithm \ref{alg:hierarchical_planning_new} can be ignored, and we have an algorithm 

%\subsection{Overview}
We start with an overview of the algorithm, summarised by Algorithm \ref{alg:hierarchical_planning}. Given an HiMM $Z =(X,T)$, the algorithm consists of two steps. In the first step, the Optimal Exit Computer preprocesses the HiMM $Z$, computing optimal exit costs $(c_x^M)_{x \in \Sigma}$ for each MM $M \in X$, and corresponding trajectories $(z_x^M)_{x \in \Sigma}$ (line 2). This step needs to be done only once for the given $Z$. In the second step, the Online Planner considers any two states $s_{\mathrm{init}}$ and $s_{\mathrm{goal}}$, and computes an optimal plan to $(Z,s_{\mathrm{init}}, s_{\mathrm{goal}})$ by first reducing irrelevant parts of the tree of $Z$ (intuitively, parts of the tree of $Z$ not on the path between $s_{\mathrm{init}}$ and $s_{\mathrm{goal}}$) using the result from the Optimal Exit Computer, then obtains an optimal trajectory $z$ for the reduced HiMM (line 4), and finally expands $z$ to get an optimal plan $u$ to the original HiMM $Z$ (line 5).

We provide the details of the Optimal Exit Computer in Section \ref{offline_step} since it is crucial for the reconfigurable case in Section \ref{more_control_algorithm_theory}. However, the Online Planner is identical in the reconfigurable case. We therefore only highlight the efficiency of it in Section \ref{online_step}, refer to \cite{stefansson2023ecc} for details.

\subsection{Optimal Exit Computer}\label{offline_step}
In this section, we provide the details of the Optimal Exit Computer computing the optimal exit costs. To this end, we first need to define what we mean by optimal exit costs:

\begin{definition}[Optimal Exit Cost]\label{def:optimal_exit_costs} 
Consider HiMM~$Z=(X,T)$ and MM $M \in X$. We say that a state $q \in S_Z$ is contained in $M$ if $q_i$ is a descendant of $M$ in $T$ (e.g, 4 is descendant of $B$ in Fig. \ref{fig:Himm_transition_example_v2}), and an $(M,x)$-exit trajectory is a trajectory $z = (q_i,x_i)_{i=1}^N$ such that all $q_i$ are contained in $M$, while $\psi(q_N,x_N)$ is not, and $x_N=x$ (i.e., exits $M$ with $x$). To an $(M,x)$-exit trajectory $z$, let the $(M,x)$-exit cost be given by $\sum_{i=1}^{N-1} \chi(q_i,x_i)$ (we exclude the cost of $(q_N,x_N)$ since the transition is happening outside the subtree with root $M$), and let $c_x^M$ be the minimum over all $(M,x)$-exit costs. We call $c_x^M$ an optimal exit cost.
%The optimal $(M,x)$-exit cost, denoted $c_x^M$, is then the minimum over all $(M,x)$-exit costs, and we say an optimal exit cost for any of these costs. 
%Let HiMM~$Z=(X,T)$ be given and consider MM $M \in X$.
\end{definition}

\iffalse % Old one!
\begin{definition}[Optimal exit cost]\label{def:optimal_exit_costs} % This is quite similar to the old paper.
Let a HiMM $Z$ be given and consider a node $n \in Q_Z$ with corresponding MM $M$. An $(n,x)$-exit trajectory is a trajectory $z = (q_i,x_i)_{i=1}^N$ such that all $q_i$ are contained in $n$ (where $q_i$ is contained in $n$ if there is a path in $T$ from $n$ to $q_i$) while $\psi(q_N,x_N)$ is not. To the $(n,x)$-exit trajectory $z$, we associate a $(n,x)$-exit cost given by $\sum_{i=1}^{N-1} \chi(q_i,x_i)$ (we exclude the cost of $(q_N,x_N)$ since the transition is happening outside the subtree with root $M$ anyway). The optimal $(n,x)$-exit cost, denoted $c_x^n$, is then the minimum over all $(n,x)$-exit costs, and an optimal exit cost is just one of these costs. 

%Finally, an optimal $(n,x)$-exit trajectory is an $(n,x)$-exit trajectory that attains the optimal $(n,x)$-exit cost, and we call any such trajectory for an optimal exit trajectory.

\end{definition} %$\xrightarrow{n} X_1 \xrightarrow{v_2} X_2 \dots X_m \xrightarrow{q}$
\fi

We now provide the details of the Optimal Exit Computer, computing the optimal exit costs $(c_x^M)_{x \in \Sigma}$ for each MM $M \in X$ recursively over $T$, as follows. Let $M \in X$ be given with states $Q(M) = \{q_1,\dots,q_l\}$. For $q \in Q(M)$, define $c_x^{q}$ to be zero if $q \in S_Z$ is a state of $Z$, or $c_x^{q} = c_x^{M_{q}}$ otherwise, where $M_{q}$ is the MM corresponding to $q$. Intuitively, this is the extra cost one needs to exit $q$ with $x$ in $Z$, where, by recursion, we may assume that each $c_x^{q}$ is known. To compute $(c_x^M)_{x \in \Sigma}$, form the augmented MM $\hat{M}$ given by:
\begin{equation*}
\hat{M} := (\{q_1,\dots,q_l\} \cup \{E_x\}_{x \in \Sigma},\Sigma(M),\Lambda(M), \hat{\delta},\hat{\gamma},s(M)).
\end{equation*}
Intuitively, $\hat{M}$ is the same as $M$ except that whenever we are in a state $q_i$ of $M$ that does not support an input $x \in \Sigma$ (i.e., $\delta_M(q_i,x) = \emptyset$), then $\hat{M}$ transits to $E_x$ instead (hence the extra states $\{E_x\}_{x \in \Sigma}$ in $\hat{M}$). More precisely, the transition function $\hat{\delta}$ is given by
\begin{equation*}
\hat{\delta}(q_i,y) :=
\begin{cases}
\delta_M(q_i,y), & \delta_M(q_i,y) \neq \emptyset \\
E_y, & \mathrm{otherwise,}
\end{cases}
\end{equation*}
and $\hat{\delta}(E_x,y)$ is immaterial. Furthermore, the cost is
\begin{equation*}
\hat{\gamma}(q_i,y) := 
\begin{cases}
c_y^{q_i}+\gamma_M(q_i,y), & \delta_M(q_i,y) \neq \emptyset \\
c_y^{q_i}, & \mathrm{otherwise,}
\end{cases}
\end{equation*}
and $\hat{\gamma}(E_x,y)$ is immaterial. With this, we can search in $\hat{M}$ from $s(M)$ using Dijkstra's algorithm to find the minimal cumulative costs to all $(E_x)_{x \in \Sigma}$. Then, $c^M_x$ is equal to the minimal cumulative cost to $E_x$ \cite[Proposition 1]{stefansson2023ecc}. In Dijkstra's algorithm, we also get the corresponding optimal trajectories $(z_x^M)_{x \in \Sigma}$ to $(c_x^M)_{x \in \Sigma}$ for free. The algorithm is summarised by Algorithm \ref{alg:update_offline_step} (also given in \cite{stefansson2023ecc} as Algorithm 2) removing lines 3-5 and 16, with time complexity $O(N [b_s |\Sigma|+(b_s+|\Sigma|) \log(b_s+|\Sigma|)])$ \cite[Proposition 2]{stefansson2023ecc}. Here, $b_s$ is the maximum number of states in an MM of $Z$, and $N$ is the number of MMs of $Z$. In particular, assuming a bound on the number of states and inputs in an MM of $Z$, we get time complexity $O(N)$. For further details, see \cite{stefansson2023ecc}.

%(also given in \cite{stefansson2023ecc} as Algorithm 2)

\iffalse
\begin{remark} 
Additionally, we get the corresponding optimal trajectories $(z_x^n)_{x \in \Sigma}$ to $(c_x^n)_{x \in \Sigma}$ for free. These trajectories are trajectories in $\hat{M}$ and may not be optimal exit trajectories. However, it is easy to recursively expand these trajectories along the tree of $Z$ to obtain the optimal exit trajectories, see \cite[Proposition 3]{stefansson2023ecc} for details. This is used in the online step.
\end{remark}
\fi

\subsection{Optimal Planner}\label{online_step}
%The Optimal Planner is identical to the online step in \cite{stefansson2023ecc}, with details beyond the scope of this paper. Here, we instead focus on the time complexity of it, referring to \cite{stefansson2023ecc} for details. 
In this section, we highlight the time complexity of the Online Planner, see \cite{stefansson2023ecc} for further details. More precisely, the time complexity of $\mathrm{Reduce\_and\_solve}$ (line 4 in Algorithm \ref{alg:hierarchical_planning}) is
$O \big (|\Sigma|^2 \mathrm{depth}(Z) + |\Sigma| \mathrm{depth}(Z) \log ( |\Sigma| \mathrm{depth}(Z) ) \big ) 
+
O \big ( [b_s |\Sigma | + b_s \log(b_s)] \mathrm{depth}(Z) \big ).$
In particular, assuming a bound on the number of inputs and states in an MM, we get time complexity $O(\mathrm{depth}(Z) \log ( \mathrm{depth}(Z)))$. Moreover, the time complexity for $\mathrm{Expand}$ (line 5 in Algorithm \ref{alg:hierarchical_planning}) is $O(\mathrm{depth}(Z) |u|)$ for obtaining the whole plan $u$ at once (where $|u|$ is the length of $u$), or $O(\mathrm{depth}(Z))$ for obtaining the next input in $u$ (ideal for sequential executions). In particular, the total time complexity for obtaining the next optimal input in $u$ is bounded by $O(\mathrm{depth}(Z) \log ( \mathrm{depth}(Z)))$. This should be compared with Dijkstra's algorithm, having a time complexity lower bounded by $O(V \log (V))$ \cite{DijkstraFibonacci}, where $V$ is the number of states in the HiMM, which in general could be exponential in $\mathrm{depth}(Z)$. The Online Planner therefore potentially saves huge computing time for large HiMMs. %also seen in Section \ref{numerical_evaluations}.

\section{Reconfigurable Hierarchical Planning}\label{more_control_algorithm_theory}
%being able to handle modifications of the HiMM $Z$~efficiently
In this section, we extend Algorithm \ref{alg:hierarchical_planning} to be reconfigurable. To first see the need for this, note that the computational bottleneck of Algorithm \ref{alg:hierarchical_planning} typically comes from the Optimal Exit Computer, with time complexity $O(N)$, whereas the Optimal Planner computes the next optimal input in time $O(\mathrm{depth}(Z) \log ( \mathrm{depth}(Z)))$. Therefore, the Optimal Planner is typically much faster than the Optimal Exit Computer. This is not a problem for a non-changing HiMM $Z$, since the optimal exit costs need to be computed only once, which can then be used repeatedly by the Optimal Planner to rapidly compute optimal plans. However, if $Z$ changes, then one would need to re-compute all the optimal exit costs every time $Z$ is changed, which is inefficient, especially if the changes are small. 
%\footnote{In fact, computing the optimal exit costs is motivated in \cite{stefansson2023ecc} by the fact that this enables rapid computations of optimal plans.}

%The computational bottleneck in this two-step procedure is the Optimal Exit Computer. Fortunately, this step needs to be done only once for a given $Z$, while the Online Planner can then use the result from the Optimal Exit Computer repeatedly to rapidly compute optimal plans between any two states on demand. However, if the HiMM $Z$ changes, then one would naively need to recompute the whole offline step (for the new HiMM), which is very inefficient, especially if the change in $Z$ is small. 

\begin{figure}
	\centering
  \includegraphics[width=0.49\textwidth]{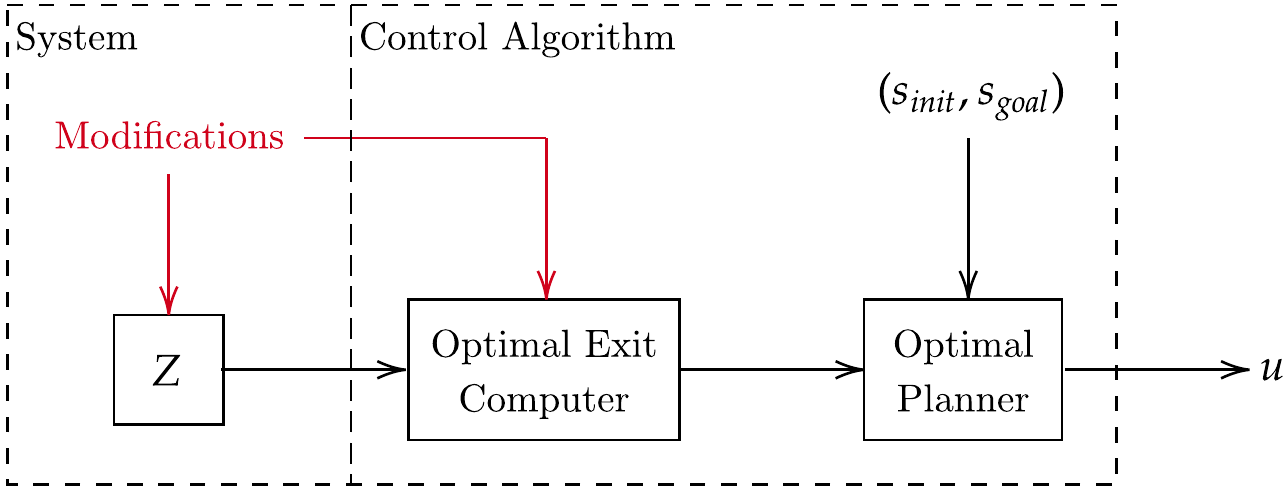}
  \caption{The control algorithm and its interaction with the changing system $Z$.} 
  %In each warehouse, the robot can move around to different locations, and at each location it can do different tasks, with decision costs given by some cost~functional.
  \label{fig:algorithm_overview}
\end{figure}

To address this issue, we extend Algorithm \ref{alg:hierarchical_planning} so that the Optimal Exit Computer updates only the needed parts of its computation whenever $Z$ changes. The resulting control algorithm and its interaction with the (changing) HiMM $Z$ are depicted in Fig. \ref{fig:algorithm_overview}. Here, a change in $Z$ is given by a sequence of modifications. The control algorithm gets informed of the change and the modifications. By specifying the modifications, only some of the MMs of $Z$ need to be recomputed, marked by the Optimal Exit Computer. After all modifications have been reported, the Optimal Exit Computer recomputes the optimal exit costs, but only for the marked MMs. This potentially saves huge computation time. 

%Those MMs are marked by the Optimal Exit Computer. Then, after all modifications have been reported, the Optimal Exit Computer recomputes the optimal exits costs, but only for the marked MMs. This potentially saves huge computation time. 

%To address this issue, we extend Algorithm \ref{alg:hierarchical_planning} so that the Optimal Exit Computer updates only the needed parts of its computation whenever $Z$ changes. The resulting control algorithm and its interaction with the (changing) HiMM $Z$ is depicted in Fig. \ref{fig:algorithm_overview}. Here, a change in $Z$ is given by a sequence of modifications. The control algorithm gets informed of the change and the modifications that changed it, and updates only the optimal exit costs for the MMs affected by the change.

%The key insight is that by specifying the modifications, only a small part of the MMs of $Z$ needs to re-compute their optimal exits costs, marked by the Optimal Exit Computer. Then, after all modifications have been reported, the Optimal Exit Computer recomputes the optimal exits costs, but only for the marked MMs. This potentially saves huge computation time. 

In more detail, the control algorithm is summarised by Algorithm \ref{alg:hierarchical_planning_new}, where changes with respect to Algorithm \ref{alg:hierarchical_planning} are in red. Here, $m_{\mathrm{seq}}$ is the sequence of modifications, read sequentially (line 2-4): For $m$ in $m_{\mathrm{seq}}$, $Z$ is first changed with respect to $m$ (line 3)\footnote{This line is executed by the system and not the control algorithm, see Fig. \ref{fig:algorithm_overview}. However, the algorithm tracks the change, so we keep it for clarity.}, then, the Optimal Exit Computer marks corresponding MMs (line 4), given by Algorithm \ref{alg:mark}. After $m_{\mathrm{seq}}$ has been read, the optimal exit costs for the marked MMs are updated (line 6), given by Algorithm \ref{alg:update_offline_step} (red parts accounts for the marking). We stress that we only need to execute the Optimal Exit Computer whenever $Z$ changes. The Optimal Planner is identical to the one in Algorithm \ref{alg:hierarchical_planning} and can be used repeatedly as long as changes are~{up-to-date}.
%to rapidly compute optimal plans, %The Optimal Planner %in Algorithm \ref{alg:hierarchical_planning_new}

\subsection{Reconfigurability}\label{reconfigurability}

We now show that Algorithm \ref{alg:hierarchical_planning_new} is reconfigurable. We first explain the marking procedure in detail. Then, we prove the correctness of the Optimal Exit Computer (Proposition \ref{prop:update_offline_step_correctness}) and show that the time complexity is low for marking and updating the optimal exit costs for each of the four modifications (Proposition \ref{prop:update_offline_step_time_complexity}), thus, Algorithm \ref{alg:hierarchical_planning_new} is reconfigurable.

\begin{algorithm}[t]
\caption{Reconfigurable Hierarchical Planning}\label{alg:hierarchical_planning_new}
\begin{algorithmic}[1]
\Require HiMM$ \; Z$, \textcolor{BrickRed}{modifications $m_{seq}$} and states $s_{\mathrm{init}}, s_{\mathrm{goal}}$.
\Ensure Optimal plan $u$ to $(Z,s_{\mathrm{init}}, s_{\mathrm{goal}})$.
\State \textbf{Optimal Exit Computer:}
\For{\textcolor{BrickRed}{$m$ in $m_{seq}$}}
\State \textcolor{BrickRed}{$\mathrm{Modify}(Z,m)$} \Comment{\textcolor{BrickRed}{This line is executed in System}}
\State \textcolor{BrickRed}{$\mathrm{Mark}(Z,m)$}
\EndFor
\State $(c_x^M,z_x^M)_{x \in \Sigma}^{M \in X} \gets \mathrm{Compute\_optimal\_exits}(Z)$
\State \textbf{Optimal Planner:}
\State $z \gets \mathrm{Reduce\_and\_solve}(Z,s_{\mathrm{init}},s_{\mathrm{goal}},(c_x^M,z_x^M)_{x \in \Sigma}^{M \in X})$
%\State Construct graph $G$
%\State $\bar{z}_1 \gets \mathrm{Dijkstra}(s_{\mathrm{init}}, \bar{B}, G)$
%\State Compute optimal trajectory $\bar{z}_1$ from $s_{init}$ to $\bar{B}$ in $\bar{Z}$.
%\State Compute optimal trajectory $\bar{z}_2$ from $\bar{B}$ to $s_{\mathrm{goal}}$ in $\bar{Z}$.
%\State $\bar{z} \gets \bar{z}_1 \bar{z}_2$.
\State $u \gets \mathrm{Expand}(z,(z_x^M)_{x \in \Sigma}^{M \in X}, Z)$
\end{algorithmic}
\end{algorithm}

\subsubsection{Marking}
% mark_ascestors and mark_all should be with respect to one MM (node). change this.
The marking of $Z=(X,T)$ given a modification $m$ is done by Algorithm \ref{alg:mark}. In brief, the procedure is to mark the MM considered by $m$ and all ancestors to it.
%In particular, the root MM of the HiMM will always be marked if any MM inside the HiMM is marked.
% Old with confusing terminology:
%To explain the procedure in detail, consider the case when the modification is an arc modification with $Z$, $M$, $M'$, and $Z' =(X',T')$ as in Definition \ref{def:arc_modification}, with marking done by lines 5-8 in Algorithm \ref{alg:mark}. Intuitively, when modifying $M$ to $M'$, the optimal exit costs for $M'$ might not be the same as for $M$. We therefore mark $M'$ (so that its optimal costs will be re-computed) in $Z'$ (line 6). Furthermore, since the optimal exit costs of $M'$ might change, this could affect the optimal exit costs of the MMs above $M'$ in the hierarchy of $Z'$, i.e., the ancestors of $M'$. We therefore mark all the ancestors of $M'$ (line 7). However, note that all other MMs in $Z'$ have unaffected optimal exits costs, and we are therefore done.
In more detail, consider the case when $m$ is an arc modification modifying an MM $M$ in $Z$. With $M$ modified, the optimal exit costs of $M$ might have changed. We therefore mark $M$ (to recompute its optimal exit costs) in $Z$ (line 1). Also, modifying $M$ could affect the optimal exit costs of the MMs above $M$ in the hierarchy of $Z$, i.e., the ancestors of $M$. We therefore mark all the ancestors of $M$ (line 3). However, note that all other MMs in $Z$ have unaffected optimal exit costs, and we are therefore done. The case for state subtraction and state addition are analogous, and in the composition case, we only need to mark $M$ since $M$ has no~ancestors. 
\begin{convention}\label{marking_convention}
When initialising an HiMM $Z$, all MMs are by default marked since no optimal exit costs have been computed. This enables us to use Algorithm \ref{alg:update_offline_step} when initialising $Z$ and not only when updating $Z$ (after modifications). Also, whenever we modify an HiMM we always mark it accordingly (as above). Thus, we do not need to mark $Z_{\mathrm{add}}$ in state addition or $\{Z_1,\dots,Z_n\}$ in composition since their optimal exit costs do not change by the modification, and are already correctly marked (note that some of their optimal exit costs might have already been computed by Algorithm \ref{alg:update_offline_step}). See the proof of Proposition \ref{prop:update_offline_step_correctness} for details.
%and we assume they have already been appropriately marked. 
\end{convention}
%\footnote{The reason why we do not need to mark $Z_{\mathrm{add}}$ in state addition, is because the optimal exits costs of the MMs in $Z_{\mathrm{add}}$ does not change by the state addition. Hence, assuming they are already appropriately marked, we do not need to mark them. Note also that the marked MMs of $Z_{\mathrm{add}}$ form a (possibly empty) subtree of $T_{\mathrm{add}}$ which when added to $Z$ becomes part of the subtree of marked MMs in $Z$, hence preserving the subtree property vital for Algorithm \ref{alg:update_offline_step}. The argument why we do not mark $\{Z_1,\dots,Z_n\}$ in the composition case is~analogous.}

%In this way, the marked MMs always form a subtree of $T$ with the root MM of $T$ included, which is vital for Algorithm \ref{alg:update_offline_step} when updating the optimal exit costs.

\begin{algorithm}[t]
\caption{Mark}\label{alg:mark}
\begin{algorithmic}[1]
\Require Modification $m$ and (modified) HiMM $Z$
\Ensure Appropriate marking of $Z$
\State $\mathrm{mark\_itself}(M,Z)$
\If{$m$ not composition}
\State $\mathrm{mark\_ancestors}(M,Z)$
\EndIf
\end{algorithmic}
\end{algorithm}

%\begin{remark}
%\end{remark}

%(and therefore trivially needs to be updated)

\subsubsection{Computing optimal exit costs}  
After marking, the optimal exit costs for the marked MMs are computed using Algorithm \ref{alg:update_offline_step}, where the computation of the optimal exit costs for a given MM is as in Section \ref{offline_step}. The correctness of this procedure follows from the following proposition:

%Once $Z=(X,T)$ is appropriately marked as detailed above, 

%The correctness of this procedure is a straightforward consequence of the appropriate marking of $Z$, the fact that Algorithm \ref{alg:update_offline_step} indeed updates all MMs that needs to be updated since the marked MMs forms a subtree, and that the optimal exit costs that does not need to be updated are saved from before. We state this result explicitly:

\begin{proposition}\label{prop:update_offline_step_correctness}
Let $Z$ be an HiMM, modified by the sequence of modifications $m_{\mathrm{seq}}$ (possibly empty, e.g., if $Z$ is only initialised) and marked using Algorithm \ref{alg:mark}. Then Algorithm \ref{alg:update_offline_step} returns the correct optimal exit costs. %and~trajectories. 
%\elis{To do: prove this}
\end{proposition}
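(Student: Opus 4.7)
The plan is to pivot on one structural fact: the optimal exit cost $c_x^M$ depends only on the subtree of $T$ rooted at $M$. This is because the augmented MM $\hat{M}$ constructed in Section \ref{offline_step} uses only $M$ itself together with the values $c_x^{q}$ for its children $q \in Q(M)$, and by recursion each $c_x^{q}$ depends only on its own subtree. I would state and prove this as a preliminary lemma by induction on the depth of $M$'s subtree. As an immediate corollary, modifying any part of $T$ can only affect the optimal exit costs of the MM subject to the modification and its ancestors in $T$.

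The proposition then follows by induction on the length of the sequence $m_{\mathrm{seq}}$. The base case is an empty sequence applied to a freshly initialised HiMM: by Convention~\ref{marking_convention} every MM is marked, so Algorithm~\ref{alg:update_offline_step} reduces to the original Optimal Exit Computer of Section~\ref{offline_step}, whose correctness is Proposition~1 of \cite{stefansson2023ecc}. For the inductive step, assume that after processing the first $k$ modifications and running Algorithm~\ref{alg:update_offline_step}, every MM in the current HiMM has its correct optimal exit cost stored. I would then case-split on the type of modification $m_{k+1}$.

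For arc modification, state addition, and state subtraction, the subject MM $M$ and all its ancestors in the post-modification tree are the only MMs whose subtrees have changed, and Algorithm~\ref{alg:mark} marks exactly this set. For state addition that connects a previously separate $Z_{\mathrm{add}}$, and likewise for composition that glues $Z_1,\ldots,Z_n$ underneath $M$, the newly attached MMs were themselves initialised and so, by Convention~\ref{marking_convention}, are already marked unless their optimal exit costs were computed earlier (in which case their subtrees have not changed and the stored values remain correct by the structural lemma). For composition specifically, $M$ is the root of the new HiMM and has no ancestors, so marking $M$ alone suffices, matching the special case in Algorithm~\ref{alg:mark}.

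The main obstacle, and the step I would write most carefully, is establishing the invariant that makes Algorithm~\ref{alg:update_offline_step} correctly interleave freshly computed costs for marked MMs with stored costs for unmarked descendants. Specifically, when the algorithm processes a marked MM $M$ and assembles $\hat{M}$, each child node $q$ must already have a valid $c_x^{q}$ available: either $q$ is unmarked, in which case the stored value is correct by the inductive hypothesis together with the structural lemma (its subtree is unchanged), or $q$ is marked, in which case the bottom-up processing order guarantees $c_x^{q}$ has just been (re)computed. Verifying this requires showing that the marking is upward-closed within each relevant subtree and that Algorithm~\ref{alg:update_offline_step} respects a post-order traversal on the marked set; once this is in place, the per-MM correctness argument of \cite{stefansson2023ecc} transfers verbatim, and the proposition follows.
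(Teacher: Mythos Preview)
Your proposal is correct and rests on the same core fact as the paper---the marked set is upward-closed (what the paper calls the \emph{subtree-property}: the marked MMs either form a subtree of $T$ containing the root, or are empty), so Algorithm~\ref{alg:update_offline_step} reaches every stale MM from the root and can safely reuse stored values at unmarked children. The organisation, however, is genuinely different. You front-load a clean structural lemma (each $c_x^M$ depends only on the subtree rooted at $M$) and then induct on $|m_{\mathrm{seq}}|$; the paper never isolates that lemma and instead inducts on the number of state-addition/composition steps used to assemble $Z$ from a pool of initialised HiMMs, carefully tracking that each attached piece ($Z_{\mathrm{add}}$ or the $Z_i$) already satisfies the subtree-property before it is glued in. Your route is more modular and easier to read; the paper's route is more explicit about the bookkeeping when several independently-built HiMMs, each with its own marking history and possibly interleaved calls to Algorithm~\ref{alg:update_offline_step}, are merged.

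One small point to tighten: your induction hypothesis assumes Algorithm~\ref{alg:update_offline_step} is run after every single modification, whereas the proposition (and Algorithm~\ref{alg:hierarchical_planning_new}) allows batching the whole sequence $m_{\mathrm{seq}}$ before a single call. This is not a real obstacle---the union of upward-closed sets is upward-closed, and your per-modification analysis shows each $\mathrm{Mark}$ call contributes an upward-closed set containing all newly-stale MMs---but you should state the invariant directly in terms of the accumulated marked set rather than in terms of interleaved runs of Algorithm~\ref{alg:update_offline_step}.
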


\iffalse % The old one.
\begin{algorithm}[h]
\caption{Compute\_optimal\_exits \elis{Perhaps rewrite this one.}}\label{alg:update_offline_step}
\begin{algorithmic}[1]
\Require HiMM $Z' = (X',T')$ with markings.
\Ensure $(c_x^n,z_x^n)_{x \in \Sigma}$ for each node $n$ in $Z'$
\State $\mathrm{Optimal\_exit}(n_0)$ \Comment{$n_0$ corresponding to root MM in $T'$}
\State return $(c_x^n,z_x^n)_{x \in \Sigma}$ for each node $n$ in $Z'$
\State $\mathrm{Optimal\_exit}(n)$: \Comment{Recursive function} % $X \gets x$
\If {$n \in S_Z$} %\Comment{That is, $q$ is a state of $Z$}
\State $(c_x^n)_{x \in \Sigma}$ $\gets$ $0_{| \Sigma |}$ \Comment{Sequence of zeros} 
\State $(z_x^n)_{x \in \Sigma}$ $\gets$ $\Sigma$ \Comment{$\Sigma$ treated as a sequence} 
\Else
\If {\textcolor{BrickRed}{$n$ is marked}} \Comment{\textcolor{BrickRed}{Only update if $n$ is marked}}
%\State $(q_1,x_1),\dots,(q_m,x_m) \gets z^q_x$ \Comment{Optimal $x$-exit from the offline step} 
\For {each child $c$ in $n$.children} %\Comment{Children w.r.t. $T$}
\State $\mathrm{Optimal\_exit}(c)$ \Comment{Update children first}
\EndFor
\State Construct $\hat{M}$
\State $(c_x^n,z_x^n)_{x \in \Sigma}$ $\gets$ Dijkstra($s(M),\{E_x\}_{x \in \Sigma},\hat{M}$)
\State \textcolor{BrickRed}{Unmark $n$}
\EndIf
\EndIf
\end{algorithmic}
\end{algorithm}
\fi

\subsubsection{Time complexity} Finally, we address the time complexity. The time complexity of Algorithm \ref{alg:mark} is $O(\mathrm{depth}(Z))$ since there are at most $\mathrm{depth}(Z)$ MMs to mark (the MM itself and all its ancestors). The time complexity of Algorithm \ref{alg:update_offline_step} varies with the modifications that have been done. In the worst case, all optimal exit costs need to be computed with time complexity $O(N [b_s |\Sigma|+(b_s+|\Sigma|) \log(b_s+|\Sigma|)])$ \cite[Proposition 2]{stefansson2023ecc}. However, for just one modification, we get:

%However if only $K$ MMs needs to updated by Algorithm \ref{alg:mark}, then this reduces to $O(K [b_s |\Sigma|+(b_s+|\Sigma|) \log(b_s+|\Sigma|)])$ (easy to see by following the proof of Proposition 2 in \cite{stefansson2023ecc}). Putting these two observations together, we get: 

\begin{proposition}\label{prop:update_offline_step_time_complexity}
Let HiMM $Z$ be given and consider the Optimal Exit Computer with only one modification $m_{\mathrm{seq}} = m$. Assume $Z$ have all optimal exit costs computed before the modification, and that the same holds for $Z_{\mathrm{add}}$ if $m$ is a state addition, or $\{Z_1,\dots,Z_n\}$ if $m$ is a composition. Then, the time complexity for executing the Optimal Exit Computer if $m$ is a state addition, state subtraction, or arc modification is $O( \mathrm{depth}(Z) [b_s |\Sigma|+(b_s+|\Sigma|) \log(b_s+|\Sigma|)] )$, and $O([b_s |\Sigma|+(b_s+|\Sigma|) \log(b_s+|\Sigma|)] )$ if $m$ is a composition. In particular, assuming a bound on the number of states and inputs of the MMs in $Z$, the time complexity for any modification is $O(\mathrm{depth}(Z))$. Thus, Algorithm \ref{alg:hierarchical_planning_new} is~reconfigurable.
\end{proposition}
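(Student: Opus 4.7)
The plan is to split the Optimal Exit Computer into its two stages---the marking pass (Algorithm~\ref{alg:mark}) and the recursive recomputation (Algorithm~\ref{alg:update_offline_step})---and to analyse each separately under the assumption that, thanks to Convention~\ref{marking_convention} and the hypotheses of the proposition, every MM not touched by the current marking call is unmarked at the moment $\mathrm{Compute\_optimal\_exits}$ is invoked.

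First I would bound the marking cost. For a state addition, a state subtraction, or an arc modification affecting an MM $M$, Algorithm~\ref{alg:mark} marks $M$ and then walks upward through the parent pointers in the tree $T$, marking its ancestors; this chain contains at most $\mathrm{depth}(Z)+1$ nodes, so the marking runs in time $O(\mathrm{depth}(Z))$. For a composition, $M$ is the root of the resulting tree and has no ancestors, so the marked set is $\{M\}$ and the marking is done in $O(1)$.

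Second, I would identify the set of marked MMs at the start of $\mathrm{Compute\_optimal\_exits}$. The hypothesis that all prior optimal exit costs in $Z$ (and in $Z_{\mathrm{add}}$, respectively in each $Z_i$) have already been computed, combined with Convention~\ref{marking_convention}, ensures that all other MMs are unmarked. Hence the marked set equals $\{M\}$ together with the ancestors of $M$ for the first three modification types (a root-to-$M$ path of length at most $\mathrm{depth}(Z)+1$), and equals $\{M\}$ for a composition.

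Third, I would bound the cost of Algorithm~\ref{alg:update_offline_step} itself. The crucial observation is that the recursive procedure only performs nontrivial work at marked MMs: at an unmarked MM it returns in $O(1)$ without descending, which is safe because the marking is upward-closed in $T$, so every unmarked MM has only unmarked descendants whose stored exit costs remain valid. Consequently the total work is the sum, over marked MMs, of iterating over the at most $b_s$ children to issue recursive calls and of one Dijkstra run on the augmented machine $\hat{M}$. The latter has $|Q(M)|+|\Sigma|\leq b_s+|\Sigma|$ nodes and at most $b_s|\Sigma|$ arcs, giving cost $O(b_s|\Sigma|+(b_s+|\Sigma|)\log(b_s+|\Sigma|))$ as in Proposition~2 of \cite{stefansson2023ecc}. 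Summing over marked MMs yields the claimed $O(\mathrm{depth}(Z)[b_s|\Sigma|+(b_s+|\Sigma|)\log(b_s+|\Sigma|)])$ for state addition, state subtraction, and arc modification, and $O(b_s|\Sigma|+(b_s+|\Sigma|)\log(b_s+|\Sigma|))$ for composition, absorbing the marking cost. Treating $b_s$ and $|\Sigma|$ as constants reduces each bound to $O(\mathrm{depth}(Z))$, which is the definition of reconfigurability from Section~\ref{reconfigurability_definition}.

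The main obstacle is the third step: one must justify that the recursion may skip unmarked subtrees without affecting correctness. This reduces to showing that the marking set produced by Algorithm~\ref{alg:mark} is upward-closed in $T$ and that the previously stored exit costs of unmodified MMs remain valid after a single modification; both facts are essentially what Proposition~\ref{prop:update_offline_step_correctness} already establishes, so the time bound follows cleanly once the structure of the marked set has been pinned down.
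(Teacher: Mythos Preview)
Your proposal is correct and follows essentially the same route as the paper: bound the marking cost, count the marked MMs (at most $\mathrm{depth}(Z)+1$ along the root-to-$M$ path for the first three modification types, and exactly one for composition), and multiply by the per-MM Dijkstra cost $O(b_s|\Sigma|+(b_s+|\Sigma|)\log(b_s+|\Sigma|))$. The paper's own proof is terser---it simply asserts, by reference to \cite[Proposition~2]{stefansson2023ecc}, that $K$ marked MMs yield time $O(K[b_s|\Sigma|+(b_s+|\Sigma|)\log(b_s+|\Sigma|)])$ and then plugs in $K\le\mathrm{depth}(Z)$ or $K=1$---whereas you spell out why unmarked subtrees are skipped in $O(1)$ and appeal to Proposition~\ref{prop:update_offline_step_correctness} for their validity; but the structure of the argument is the same.
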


%However, minor modifications results in much lower time complexity, exemplified by Proposition \ref{prop:update_offline_step_complexity}.

\section{Numerical evaluations}\label{numerical_evaluations}
In this section, we present numerical evaluations, validating the proposed control algorithm given by Algorithm \ref{alg:hierarchical_planning_new}. We consider the robot application from the motivation and compare our method with Dijkstra's algorithm and Contraction Hierarchies, \if\longversion0 see \cite{stefansson2023cdc} for implementation details.\else
see Appendix for implementation~details.
\fi

\begin{algorithm}[t]
\caption{Compute\_optimal\_exits}\label{alg:update_offline_step}
\begin{algorithmic}[1]
\Require HiMM $Z = (X,T)$ with markings.
\Ensure Computed $(c_x^M,z_x^M)_{x \in \Sigma}$ for each MM $M$ of $Z$
\State $\mathrm{Optimal\_exit}(M_0)$ \Comment{Run from root MM $M_0$ in $T$}
%\State return $(c_x^M,z_x^M)_{x \in \Sigma}$ for each MM of $Z$
\State $\mathrm{Optimal\_exit}(M)$: \Comment{Recursive help function} % $X \gets x$
\If {\textcolor{BrickRed}{$M$ is not marked}}
\State return $(c_x^M,z_x^M)_{x \in \Sigma}$ \Comment{Have already been computed}
\Else
\For {each state $q$ in $Q(M)$}
\If{$q \in S_Z$}
\State $(c_x^q)_{x \in \Sigma} \gets 0_{|\Sigma|}$
\Else
\State Let $M_q$ be the MM corresponding to $q$.
\State $(c_x^{q},z_x^{q})_{x \in \Sigma} \gets \mathrm{Optimal\_exit}(M_q)$
\EndIf
\EndFor
\State Construct $\hat{M}$
\State $(c_x^M,z_x^M)_{x \in \Sigma}$ $\gets$ Dijkstra($s(M),\{E_x\}_{x \in \Sigma},\hat{M}$)
\State \textcolor{BrickRed}{Unmark $M$}
\State return $(c_x^M,z_x^M)_{x \in \Sigma}$
\EndIf
\end{algorithmic}
\end{algorithm}

%see Appendix for implementation~details. 

%The latter is a shortest-path algorithm for graphs involving a preprocessing step, adding additional arcs in the graph, and a query step that uses the modified graph to compute optimal plans. The method has several heuristics to optimise performance. We refer to Appendix for details.  

%Contraction Hierarchies is a shortest-path algorithm for graphs involving a preprocessing phase, modifying the graph by adding additional arcs (called shortcuts), and a query phase that uses the modified graph to speed-up the shortest-path search between two nodes using bidirectional search. The preprocessing phase has several heuristics to optimise performance. The heuristics we use are given in Appendix. We also refer to \cite{geisberger2012exact} for details concerning Contraction Hierarchies.

\subsection{Case studies}
We first formalise the robot application from the motivation, also given in \cite{stefansson2023ecc}. The HiMM $Z$ is constructed using three MMs, corresponding to the layers in Fig. \ref{fig:robot_example_detailed_overview}.

The first MM (for Layer 1) has 10 states corresponding to the houses. The houses are ordered in a line where the robot can move to a neighbouring house as seen by the arrows in Fig. \ref{fig:robot_example_detailed_overview}, at a cost of 100. Start state is $s(M_1) = 1$.

The second MM (for Layer 2), $M_2$, corresponds to the locations in a house with $10\cdot 10+1=101$ states in a grid-like formation where the robot can move to any neighbouring grid-point, at a cost of 1, as in Fig. \ref{fig:robot_example_detailed_overview}, and start state $s(M_2)$ marked S, interpreted as the entrance to the house. From $s(M_2)$, the robot can also move left/right. However, these two inputs make the robot exit $M_2$ and instead, in $Z$, move it to the corresponding neighbouring house in $M_1$. 

The third MM (for Layer 3), $M_3$, corresponds to the lab desk at a location, with $10 \cdot 9+1 = 91$ states. Here the robot starts at $s(M_3)$, marked S in Fig. \ref{fig:robot_example_detailed_overview}. From $s(M_3)$, the robot can choose to go up, down, left or right, which quits $M_3$ and instead moves to the corresponding location in $M_2$; It can also start to steer a robot arm over a $3 \times 3$ test tube rack analogous to the location-grid in $M_2$, where the robot arm $(i,j)$ ($1 \leq i,j \leq 3$) is initially over tube (1,1). The robot can also scan a tube with the arm. The scanned tube $(i,j)$ is then remembered and no further tubes can be scanned. Finally, the robot can go back to $s(M_3)$ when the arm is again over tube (1,1). All costs in $M_3$ are set to 0.5, except scanning which costs 10. The hierarchy of $M_1$, $M_2$ and $M_3$ yields $Z$ with $10 \cdot 101 \cdot 91 = 91910$ states.

The case studies we consider are as follows. In Study 1, we consider $Z$ just described. In Study 2, we change $Z$ by adding a house, House 11, being a neighbour house to House 10 (see Fig. \ref{fig:robot_example_detailed_overview}), formally done by a state addition (adding House 11), followed by an arc modification (connecting House 10 and 11). All MMs of House 11 are marked, i.e., their optimal exit costs have not been computed yet. In Study 3, we also change $Z$ (from Study 1), but now by blocking some locations in House 2 as in Fig. \ref{fig:robot_example_change_overview}, formally removing locations using state subtraction. In all studies, the robot starts in the bottom-right corner of House 1 with an arm over tube $(2,2)$, with the goal to scan tube $(2,2)$ in the bottom-right corner of House 10, 11 and 2 in Study 1, 2 and 3 respectively.
 %while Study 2 and 3 have the same objective except in House 11 and 2, respectively.

\iffalse % Old divided into subsections.
\subsubsection{Study 1}
Study 1 considers $Z$ just described.
\subsubsection{Study 2}
In Study 2, we change $Z$ by adding a house, House 11, being a neighbour house to House 10 (see Fig. \ref{fig:robot_example_detailed_overview}). This is formally done by a state addition (adding House 11), followed by an arc modification (connecting House 10 and 11). All MMs of House 11 are marked, i.e., no optimal exits costs have been computed yet.
\subsubsection{Study 3}
In Study 3, we again change $Z$ (from Study 1), but now block some locations in House 2 as in Fig. \ref{fig:robot_example_change_overview}. Formally, we remove the locations using state subtraction.
\fi

%From the start state $s(M_3)$, marked S in Figure \ref{fig:robot_example_detailed_overview}, the robot can either start to steer the robot arm at tube $(1,1)$, or go up, down, left or right, which then quits $M_3$ and instead moves in $M_2$ accordingly. Finally, the robot also go to $s(M_3)$ when the robot arm is at tube $(1,1)$.
\subsection{Result}

\iffalse
\begin{table}[]
\centering
\caption{Computing times for Case Study 2.}
\label{table_1}
\begin{tabular}{llll}
\hline
  &  Study 1 & Study 2  & Study 3  \\ \hline
Compute optimal exits               & 2.234 s & 2.2417 s & 1.959 s \\
Update optimal exits            & - & 0.20837 s & 0.002345 s \\
Get optimal plan           & 0.02539 s & 0.02714 & 0.02281 \\
Dijkstra's algorithm                & 3.73 & 4.0134 s & 1.366 s \\
CH - preprocessing               & 1196.4 s  & 1320.99 s & 1174.8 s \\
CH - compute optimal plan               & 0.01655 s & 0.01866 s & 0.015528 s \\
\hline  
\end{tabular}
\end{table}
\fi

\begin{table}[]
\centering
\caption{Computing times for the case studies.}
\label{table_1}
\begin{tabular}{llll}
\hline
  &  Study 1 & Study 2  & Study 3  \\ \hline
Compute optimal plan           & 0.0254 s & 0.0271 s & 0.0228 s \\
Compute optimal exit costs            & 2.23 s & 0.208 s & 0.00234 s \\
Compute all optimal exit costs               & 2.23 s & 2.24 s & 1.96 s \\
Dijkstra's algorithm                & 3.73 s & 4.01 s & 1.37 s \\
CH – preprocessing               & 1196 s  & 1320 s & 1175 s \\
CH – compute optimal plan               & 0.0166 s & 0.0187 s & 0.0155 s \\
\hline  
\end{tabular}
\end{table}

The results of Study 1 to 3 are summarised in Table \ref{table_1}. The first row shows the time it takes the Optimal Planner in Algorithm \ref{alg:hierarchical_planning_new} to compute an optimal plan. In all three studies, the computation time is around 20-30 ms. This should be compared with Dijkstra's algorithm (row 4) with times around 1-4 s, two orders of magnitudes slower. Contraction Hierarchies (row 6) has instead a computation time around 16-19 ms, therefore slightly faster than Algorithm \ref{alg:hierarchical_planning_new}, though still the same order of magnitude. However, the slightly faster times come at the expense of a slower preprocessing step for the Contraction Hierarchies taking around 1200-1300 s to finish (row 5), compared with significantly faster computation times of Algorithm \ref{alg:update_offline_step} (row 2): 2.23 s, 0.208 s and 0.00234 s for Study 1, 2 and 3, respectively. Here, we also see the importance of reconfigurability, explaining the time differences of Algorithm \ref{alg:update_offline_step}. More precisely, in Study 1, we initialise $Z$ and need to compute all optimal exit costs. This takes the same time (2.23 s) as computing all optimal exit costs (row 3). However, in Study 2, we only modify $Z$ from Study 1, hence, only some optimal exit costs need to be recomputed, in this case, all MMs of $Z$ corresponding to House 11 (since they have not been computed yet) and the root MM of $Z$ (since it has been modified). The result is a much faster computation time (0.208 s), compared to computing all optimal exit costs (2.24 s). In Study 3, the modification results in even fewer MMs that need to be recomputed, with resulting time 2.34 ms. This shows the benefit of reconfigurability, being several orders of magnitude faster than needing to compute all optimal exit costs (as in \cite{stefansson2023ecc}).

\section{Conclusion}\label{conclusion}
In this paper, we have considered a planning problem for a large-scale system modelled as an HiMM $Z$ and developed a control algorithm for computing optimal plans between any two states. The control algorithm consists of two steps. A preprocessing step computing optimal exit costs for each MM in the $Z$, and a query step computing an optimal plan between any two states, using the optimal exit costs. Given bounds on the number of states and inputs of an MM in $Z$, the preprocessing step has time complexity $O(N)$ (where $N$ is the number of MMs in $Z$), while the query step has time complexity $O(\mathrm{depth}(Z) \log ( \mathrm{depth}(Z)))$ for obtaining the next optimal input. The latter enables rapid optimal plan computations. The control algorithm is also reconfigurable in the sense that changes, given by modifications, can be handled with ease, where a modification in $Z$ only needs a minor update of the preprocessing step in time $O(\mathrm{depth}(Z))$. We validated the control algorithm on a robotic application, comparing it with Dijkstra's algorithm and Contraction Hierarchies. We noted that our control algorithm outperforms Dijkstra's algorithm, computes optimal plans in the same order of magnitude as Contraction Hierarchies but with a significantly faster preprocessing step, and handles changes in the HiMM efficiently.

Future work includes extensions to stochastic hierarchical systems and additional numerical~evaluations. Another interesting direction is to modify the decomposition method in \cite{biggar2021modular} to automatically decompose an MM into an equivalent HiMM and then use our planning algorithm on the HiMM to efficiently compute optimal plans.

\bibliographystyle{plain}
\bibliography{Ref3}

\if\longversion1

\section*{Appendix}

\subsection{Proofs}

%We prove the key observation by induction. More precisely, note first that $Z$ is formed by starting with a set of HiMMs that through modifications eventually form $Z$. Let $\mathcal{Z} = \{Z_1,\dots,Z_n\}$ be this set of HiMMs (where we treat $M$ in the composition modification as a HiMM with only one MM, namely $M$). We prove that $Z$ is correctly marked by induction over the cardinality $n$ of $\mathcal{Z}$. Consider the base case $n=1$. 

%Note also that the marked MMs of $Z_{\mathrm{add}}$ form a (possibly empty) subtree of $T_{\mathrm{add}}$ which when added to $Z$ becomes part of the subtree of marked MMs in $Z$, hence preserving the subtree property vital for Algorithm \ref{alg:update_offline_step}. The argument why we do not mark $\{Z_1,\dots,Z_n\}$ in the composition case is~analogous.
We start with the proof of Proposition \ref{prop:update_offline_step_correctness}, followed by the proof of Proposition \ref{prop:update_offline_step_time_complexity}.

\begin{proof}[Proof of Proposition \ref{prop:update_offline_step_correctness}]
Let HiMM $Z =(X,T)$ be given. The key observation for showing Proposition \ref{prop:update_offline_step_correctness} is to note that the marked MMs of $Z$ will always either form a subtree of $T$ that includes the root of $T$, or be empty (where the latter is the case if all MMs in $Z$ have optimal exit costs that are up-to-date). Therefore, Algorithm \ref{alg:update_offline_step} (starting at the root of $T$) will always reach all MMs that are marked and since the computation for a specific MM is correct (see Section \ref{control_algorithm_theory}), we conclude that Algorithm \ref{alg:update_offline_step} indeed returns the correct optimal exit costs and corresponding trajectories. Thus, it remains to show the key observation (i.e., that the marked MMs of $Z$ will always either form a subtree of $T$ that includes the root of $T$, or be empty), which is intuitively clear, but needs a technical proof to account for the HiMMs that one adds to form $Z$. The proof is given below, where we for brevity say that a HiMM $Z$ that fulfils the key observation has the subtree-property.

We prove the key observation by induction. More precisely, note first that $Z$ is formed by starting with a set $\mathcal{Z}$ of initialised HiMMs that through modifications eventually form $Z$. Let $n$ be the number of times we add HiMMs in this procedure using either state addition or composition. We prove that $Z$ has the subtree-property by induction over $n$. 

\subsubsection*{Base case}
Consider first the base case $n=0$. In this case, $\mathcal{Z}$ can only contain one element\footnote{Since if $\mathcal{Z}$ would contain at least two elements, then these HiMMs would eventually need to be composed or added to one another (to form $Z$ in the end), which is not possible if $n=0$.}, call it $Z_1$. Therefore, we only start with $Z_1$, which is first initialised and changed through modifications to eventually form $Z$, where we possibly also compute some of the optimal exit costs using Algorithm \ref{alg:update_offline_step} along the way (see Convention \ref{marking_convention}). Formally, this procedure is nothing but a sequence of changes on the form
\begin{equation}\label{eq:proof_sequence}
Z_1 \xrightarrow{o_1} Z^{(1)} \xrightarrow{o_2} Z^{(2)} \xrightarrow{o_3} \dots \xrightarrow{o_m} Z
\end{equation}
where $o_i$ is either a modification, or $o_i = \star$ specifying that we call Algorithm \ref{alg:update_offline_step}. More precisely, for a modification $o_i$, $Z \xrightarrow{o_i} Z'$ means that we changed $Z$ to $Z'$ through the modification $o_i$, while $Z \xrightarrow{\star} Z'$ means that we called Algorithm \ref{alg:update_offline_step} on $Z$ with $Z'$ as a result (in the latter case, note that $Z'$ only differs by which MMs are marked). We will prove that all HiMMs in \eqref{eq:proof_sequence} fulfil the subtree-property, by induction over the sequence \eqref{eq:proof_sequence}. For the base case, note that $Z_1$ trivially fulfils the subtree-property since all MMs are marked (due to initialisation, see Convention \ref{marking_convention}). For the induction step, assume that $W$ is an HiMM in \eqref{eq:proof_sequence} that fulfils the subtree-property, and changed to $W'$ through $o$. We prove that $W'$ then also fulfils the subtree-property, separating the proof into cases:
\begin{enumerate}[(i)]
\item Consider first the case when $o = \star$. Then we call Algorithm \ref{alg:update_offline_step} on $W$ to get $W'$. Since $W$ has the subtree-property, Algorithm \ref{alg:update_offline_step} will reach all marked MMs and unmark them. Hence, $W'$ has no marked MMs, and therefore fulfils the subtree-property.
\item Consider now the case when $o$ is a modification. If $o$ is a state subtraction or arc modification, on MM $M$ in $W$ say, then we mark $M$ and all the ancestors of $M$ in $W'$ (due to Convention \ref{marking_convention}).\footnote{In particular, we mark the root MM of $W'$.} Therefore, $W'$ fulfils the subtree-property. If $o$ is a state addition, then $o$ cannot add any HiMM $Z_{\mathrm{add}}$ to an MM $M$ (since this would contract $n=0$). Thus, $o$ can only add a state to an MM $M$. In this case, $M$ and all ancestors of are marked, so $W'$ again fulfils the subtree-property. Finally, $o$ cannot be a composition since $n=0$.
\end{enumerate}
Combining the two cases, we conclude that $W'$ fulfils the subtree-property. By induction over the sequence \eqref{eq:proof_sequence}, all HiMMs in \eqref{eq:proof_sequence} fulfil the subtree-property. In particular, $Z$ fulfils the subtree-property. We have therefore proved the base case $n=0$.

\subsubsection*{Induction step}
We continue with the induction step. Towards this, assume $n\geq1$. Let $Z_1$ be the last HiMM that is formed by either adding a HiMM using state addition, or using composition.\footnote{Note that $Z_1$ must indeed exist since we start with the HiMMs in $\mathcal{Z}$ but eventually are left with just $Z$.} Thus, from $Z_1$, $Z$ is formed by a sequence of changes as in \eqref{eq:proof_sequence}.\footnote{This follows from the fact that we cannot add another HiMM from $Z_1$ to $Z$, since $Z_1$ was the last HiMM that was formed by adding a HiMM.} Therefore, by just following the proof for the case $n=0$, we know that $Z$ fulfils the subtree-property if $Z_1$ fulfils the subtree-property. It remains to prove that $Z_1$ fulfils the subtree-property. By assumption, $Z_1$ is formed by adding a HiMM using state addition or composition. We consider the two cases separately:
\begin{enumerate}
\item Assume first that $Z_1$ is formed by state addition, adding an HiMM $Z_{\mathrm{add}}$ to an HiMM $Z_{\mathrm{to}}$. Let $n_{\mathrm{add}}$ ($n_{\mathrm{to}}$) be the number of times we add HiMMs to form $Z_{\mathrm{add}}$ ($Z_{\mathrm{to}}$) using either state addition or composition, similar to $n$. Then, $n_{\mathrm{add}} < n$ and $n_{\mathrm{to}} < n$.\footnote{This is because every time we add an HiMM to form e.g., $Z_{\mathrm{add}}$, this also counts as adding an HiMM to form $Z$, and we add at least one more HiMM to form $Z$, due to the formation of $Z_1$. Hence, $n_{\mathrm{add}} < n$. The argument for $n_{\mathrm{to}} < n$ is analogous.} By induction, $Z_{\mathrm{add}}$ and $Z_{\mathrm{to}}$ fulfils the subtree-property. Let $Z_{\mathrm{add}}$ be added to the MM $M$ in $Z_{\mathrm{to}}$. Then $M$ and its ancestors are marked\footnote{Note that this connects the marked subtree of $Z_{\mathrm{add}}$ with the marked subtree of $Z_{\mathrm{to}}$.} in $Z_1$ and since both $Z_{\mathrm{add}}$ and $Z_{\mathrm{to}}$ fulfils the subtree property, we conclude that $Z_1$ fulfils the~subtree-property.
\item Assume now that $Z_1$ is formed by composition using an MM $M$. By induction, analogous to the state addition case, all the HiMMs in the composition fulfils the subtree-property, and since $M$ (the root of $Z_1$) is marked, we conclude that $Z_1$ also fulfils the~subtree-property.
\end{enumerate}
Combining the two cases, we conclude that $Z_1$ fulfils the subtree-property, hence, $Z$ fulfils the subtree-property. 

We have showed both the base case and the induction step. Therefore, by induction over $n$, $Z$ always fulfils the subtree-property, proving the key observation. Therefore, the proof of Proposition \ref{prop:update_offline_step_correctness} is~complete.
\end{proof}

We continue with the proof of Proposition \ref{prop:update_offline_step_time_complexity}.

\begin{proof}[Proof of Proposition \ref{prop:update_offline_step_time_complexity}]
Let HiMM $Z=(X,T)$ be given. Following the proof of Proposition 2 in \cite{stefansson2023ecc}, it is easy to conclude that if $K$ MMs are marked in $Z$, then the time complexity of Algorithm \ref{alg:update_offline_step} is $O(K [b_s |\Sigma|+(b_s+|\Sigma|) \log(b_s+|\Sigma|)])$. Since $Z$ has no marked MMs before the modification $m$, the number of MMs in $Z$ after the modification is at most $\mathrm{depth}(Z)$ if $m$ is a state addition, state subtraction, or arc modification, and at most 1 if $m$ is a composition. Hence, marking takes time $O(\mathrm{depth}(Z))$ for state addition, state subtraction, or arc modification, and $O(1)$ composition. Furthermore, using the observation above, running Algorithm \ref{alg:update_offline_step} therefore takes time $O(\mathrm{depth}(Z) [b_s |\Sigma|+(b_s+|\Sigma|) \log(b_s+|\Sigma|)])$ for state addition, state subtraction, or arc modification, and $O( 1 [b_s |\Sigma|+(b_s+|\Sigma|) \log(b_s+|\Sigma|)])$ for composition. We conclude that the time complexity for executing the Optimal Exit Computer if $m$ is a state addition, state subtraction, or arc modification is $O( \mathrm{depth}(Z) [b_s |\Sigma|+(b_s+|\Sigma|) \log(b_s+|\Sigma|)] )$, and $O([b_s |\Sigma|+(b_s+|\Sigma|) \log(b_s+|\Sigma|)] )$ if $m$ is a composition.
%However if only $K$ MMs needs to updated by Algorithm \ref{alg:mark}, then this reduces to $O(K [b_s |\Sigma|+(b_s+|\Sigma|) \log(b_s+|\Sigma|)])$ (easy to see by following the proof of Proposition 2 in \cite{stefansson2023ecc}). Putting these two observations together, we get: 
\end{proof}

\subsection{Implementation Details}
In this section, we provide additional details concerning the implementation used in the numerical investigations in Section \ref{numerical_evaluations}, with focus on Contraction Hierarchies. First of all, all implementations are in Python. Moreover, all three methods (Algorithm \ref{alg:hierarchical_planning_new}, Dijkstra's algorithm and Contraction Hierarchies) uses Dijkstra's algorithm in their implementation. Theoretically, Dijkstra's algorithm has the lowest time complexity when using Fibonacci heaps \cite{DijkstraFibonacci}. However, in practice, better performance is typically achieved (at least in our case) with a simple priority queue, hence, in the simulations, we use the latter.

We continue with the details concerning our implementation of Contraction Hierarchies. First of all, Contraction Hierarchies is a shortest-path algorithm for graphs involving a preprocessing step and a query step. In the preprocessing step, one sequentially removes one node at a time in the graph (called a contraction), adding additional arcs (called shortcuts) to account for the removed node (the shortcuts are added so that the remaining nodes have the same shortest-distance to each other). This contraction procedure is then iterated until there is no nodes left to contract. Then, in the query step, an optimal plan is computed using a bidirectional search on the original graph but with the added shortcuts from the preprocessing step, with intuition that the added shortcuts improve the search speed. See \cite{geisberger2012exact} for details.

To get a good performance using Contraction Hierarchies, the node order selection in the preprocessing step is crucial (i.e., how we select the next node to contract). In our implementation, we picked three common heuristics for selecting the next node. Namely, we conduct lazy evaluations, where we pick the next node $n$ based on the edge difference of $n$, and add this number with the number of neighbours to $n$ that has already been contracted (to account for sparsity). The details of these heuristics are beyond the scope of this paper; instead we refer to the lecture notes given by Hannah Bast at \url{https://ad-wiki.informatik.uni-freiburg.de/teaching/EfficientRoutePlanningSS2012} (which our implementation is based on) and the paper \cite{geisberger2012exact}. To speed-up the preprocessing step (following the lecture notes), we also set a maximum number of visited nodes when Dijkstra's algorithm is used in a contraction to locally search if shortcuts should be added, setting the maximum to 20 nodes. Again, see the lecture notes and the paper \cite{geisberger2012exact} for details. Finally, we stress that there might exists other variations of Contraction Hierarchies that might perform better than our implementation (e.g., using other heuristics).
%Here, we followed a standard implementation of Contraction Hierarchies with common heuristics.

%We are also aware that other coding languages might result in a faster algorithm, however, this is true for all algorithms, including our own algorithm.

%modifying the graph by adding additional arcs (called shortcuts), and a query phase that uses the modified graph to speed-up the shortest-path search between two nodes using bidirectional search. The preprocessing phase has several heuristics to optimise performance. In our implementation, we picked three common heuristics for the preprocessing step. Namely in the node order selection in the preprocessing step, we select which node to contract based on lazy evaluations, computing edge difference for neighbours, and accounting for sparcity. We also put a maximum number of visited nodes each time we use Dijkstra's algorithm to check if a shortcut needs to be added, setting the maximum to 20 visited nodes. The latter is used to improve the speed of the preprocessing step. The implementation is based on the lecture notes found in \cite{}.

%Contraction Hierarchies is a shortest-path algorithm for graphs involving a preprocessing phase, modifying the graph by adding additional arcs (called shortcuts), and a query phase that uses the modified graph to speed-up the shortest-path search between two nodes using bidirectional search. The preprocessing phase has several heuristics to optimise performance. The heuristics we use are given in Appendix. We also refer to \cite{geisberger2012exact} for details concerning Contraction Hierarchies.

\fi

\end{document}